\documentclass[11pt]{article}
\usepackage[T1]{fontenc}
\usepackage[utf8]{inputenc}
\usepackage{lmodern}
\usepackage{microtype}
\usepackage{fullpage}
\usepackage{authblk}
\usepackage{amsmath,amssymb,amsthm,mathtools}
\usepackage{aliascnt}
\usepackage{bm}
\usepackage{mathpazo}
\usepackage{booktabs}
\usepackage{tabularx}
\usepackage{longtable}
\usepackage{array}
\usepackage{xcolor}
\definecolor{MainBlue}{HTML}{173B57}
\definecolor{AccentTeal}{HTML}{247B7B}
\definecolor{SoftBlue}{HTML}{F1F6FA}
\definecolor{SoftTeal}{HTML}{EFF8F7}
\definecolor{SoftGray}{HTML}{F5F6F7}
\definecolor{RuleGray}{HTML}{D8DEE4}
\definecolor{WarmGold}{HTML}{A66F18}
\usepackage{enumitem}
\usepackage{mdframed}
\usepackage{needspace}
\usepackage[numbers,sort&compress]{natbib}
\usepackage{hyperref}
\usepackage[nameinlink,noabbrev]{cleveref}

\hypersetup{
colorlinks=true,
linkcolor=blue!55!black,
citecolor=blue!55!black,
urlcolor=blue!55!black,
pdfborder={0 0 0}
}
\allowdisplaybreaks
\theoremstyle{plain}
\newtheorem{theorem}{Theorem}[section]
\newaliascnt{proposition}{theorem}
\newtheorem{proposition}[proposition]{Proposition}
\aliascntresetthe{proposition}
\newaliascnt{lemma}{theorem}
\newtheorem{lemma}[lemma]{Lemma}
\aliascntresetthe{lemma}
\newaliascnt{corollary}{theorem}
\newtheorem{corollary}[corollary]{Corollary}
\aliascntresetthe{corollary}
\theoremstyle{definition}
\newaliascnt{definition}{theorem}
\newtheorem{definition}[definition]{Definition}
\aliascntresetthe{definition}
\theoremstyle{remark}
\newaliascnt{remark}{theorem}
\newtheorem{remark}[remark]{Remark}
\aliascntresetthe{remark}

\mdfdefinestyle{theoremframe}{
backgroundcolor=SoftBlue,
linecolor=MainBlue,
linewidth=.65pt,
innerleftmargin=7pt,innerrightmargin=7pt,
innertopmargin=6pt,innerbottommargin=6pt,
needspace=6\baselineskip,
nobreak=true,
skipabove=9pt,skipbelow=9pt
}
\mdfdefinestyle{propositionframe}{
backgroundcolor=SoftTeal,
linecolor=AccentTeal,
linewidth=.6pt,
innerleftmargin=7pt,innerrightmargin=7pt,
innertopmargin=6pt,innerbottommargin=6pt,
needspace=6\baselineskip,
nobreak=true,
skipabove=8pt,skipbelow=8pt
}
\mdfdefinestyle{lemmaframe}{
backgroundcolor=SoftGray,
linecolor=RuleGray!60!black,
linewidth=.5pt,
innerleftmargin=7pt,innerrightmargin=7pt,
innertopmargin=5pt,innerbottommargin=5pt,
needspace=5\baselineskip,
nobreak=true,
skipabove=8pt,skipbelow=8pt
}
\surroundwithmdframed[style=theoremframe]{theorem}
\surroundwithmdframed[style=propositionframe]{proposition}
\surroundwithmdframed[style=propositionframe]{corollary}
\surroundwithmdframed[style=lemmaframe]{lemma}
\surroundwithmdframed[style=theoremframe]{definition}

\crefname{theorem}{theorem}{theorems}
\Crefname{theorem}{Theorem}{Theorems}
\crefname{proposition}{proposition}{propositions}
\Crefname{proposition}{Proposition}{Propositions}
\crefname{lemma}{lemma}{lemmas}
\Crefname{lemma}{Lemma}{Lemmas}
\crefname{corollary}{corollary}{corollaries}
\Crefname{corollary}{Corollary}{Corollaries}
\crefname{definition}{definition}{definitions}
\Crefname{definition}{Definition}{Definitions}
\crefname{remark}{remark}{remarks}
\Crefname{remark}{Remark}{Remarks}
\crefname{equation}{equation}{equations}
\Crefname{equation}{Equation}{Equations}

\newcommand{\cL}{\mathcal L}
\newcommand{\cD}{\mathcal D}
\newcommand{\cN}{\mathcal N}
\newcommand{\cM}{\mathcal M}
\newcommand{\cC}{\mathcal C}
\newcommand{\Tr}{\operatorname{Tr}}
\newcommand{\supp}{\operatorname{supp}}
\newcommand{\id}{\operatorname{id}}

\newcommand{\1}{\mathbb 1}
\newcommand{\e}{\mathrm e}
\newcommand{\ket}[1]{\lvert #1\rangle}
\newcommand{\bra}[1]{\langle #1\rvert}
\newcommand{\norm}[1]{\left\lVert #1\right\rVert}
\newcommand{\sandD}{\widetilde D}
\newcommand{\sandQ}{\widetilde Q}

\newcommand{\Dmax}{D_{\max}}
\newcommand{\parr}{\mathrm{par}}
\newcommand{\ada}{\mathrm{ada}}
\newcommand{\gen}{\mathrm{gen}}

\numberwithin{equation}{section}

\title{\Large\textbf{Quantum Channel Stein Theorem beyond Definite Causal Order}}
\author[1]{Chengkai Zhu}
\author[2,*]{Xin Wang}

\affil[1]{QudeLeap Research, Shanghai 200030, China}
\affil[2]{Thrust of Artificial Intelligence, Information Hub,\par
The Hong Kong University of Science and Technology (Guangzhou),
Guangzhou 511453, China}
\date{September 2026}

\begin{document}
\maketitle

\begingroup
\renewcommand{\thefootnote}{\fnsymbol{footnote}}

\footnotetext[1]{\href{mailto:felixxinwang@hkust-gz.edu.cn}
{felixxinwang@hkust-gz.edu.cn}}

\endgroup

\begin{abstract}
Distinguishability is a central concept in information theory and extends naturally to quantum systems. This work shows that, for any two finite-dimensional memoryless quantum channels, parallel, adaptive and general testing strategies achieve the same Stein exponent at every fixed type-I error tolerance $\varepsilon\in(0,1)$, namely the regularized channel relative entropy. When this rate is finite, any larger type-II exponent forces the probability of correctly accepting the null hypothesis to decay exponentially. Thus, adaptivity provides no asymptotic advantage. The key step is proving the continuity of the regularized sandwiched Rényi channel divergence at order one. We also derive the exact strong-converse exponent for general testers.
\end{abstract}

\setcounter{tocdepth}{1}
\begingroup
\small
\tableofcontents
\endgroup
\clearpage

\section{Introduction}

To distinguish two quantum channels, an experimenter can send a probe through channel uses and discriminate the output quantum states. A parallel strategy applies several channel uses to a joint probe. Repeating a fixed block produces independent output pairs, to which the state Stein lemma applies~\cite{HiaiPetz1991,OgawaNagaoka2000}. Optimizing the block gives the
regularized, reference-assisted channel relative entropy
$D^\infty(\cN\Vert\cM)$ as an achievable rate~\cite{WangWilde2019,FangFawziRennerSutter2020}. For parallel and
adaptive strategies, Fawzi and Fawzi characterized the decay exponent
of correct null acceptance when the Type-II exponent is constrained~\cite[Theorem 5.5]{FawziFawzi2021}. Their formula uses regularized sandwiched R\'enyi divergences of orders greater than one, so its positivity threshold is $\inf_{p>1}\sandD_p^\infty(\cN\Vert\cM)$. Whether this infimum equals $D^\infty(\cN\Vert\cM)$ is the endpoint question identified in~\cite[Remarks 5.3 and 5.6]{FawziFawzi2021}. Related work connects strong converses, smoothing, and continuity while also studying channel discrimination without quantum memory
\cite{FangGourWang2025}. Finite-use adaptive parallelization
\cite{BerghEtAl2024} and equivalence under postselected discrimination
\cite{RegulaLamiWilde2024} concern different operational comparisons;
in the latter task, errors are conditioned on a conclusive outcome.
Earlier amortized-divergence bounds established the strong Stein lemma
for classical--quantum channels \cite{WildeBertaHircheKaur2020}.

Beyond parallel and sequential strategies, indefinite causal order allows channel uses to be embedded in a higher-order process without a predetermined causal order, and can yield a strict advantage in finite-use minimum-error channel discrimination~\cite{OreshkovCostaBrukner2012,BavarescoEtAl2021}. That hierarchy does not determine the asymmetric error exponent. In asymmetric testing, the null channel $\cN$ must be accepted with probability at least
$1-\varepsilon$, while the probability of accepting it when the channel
is $\cM$ is minimized. The existing parallel and adaptive results also leave processes without a definite causal order outside their scope. This motivates a fundamental question:
\begin{center}
\itshape
Can indefinite causal order improve the error exponent in asymmetric channel discrimination?
\end{center}

In this work, we prove that parallel, adaptive, and general testers
have the same fixed-error Stein rate $D^\infty(\cN\Vert\cM)$ for every finite-dimensional channel pair (\Cref{thm:stein}). When this rate is finite, correct null acceptance
decays exponentially at every larger Type-II exponent. Our proof
separately resolves the tester-normalization loss and the R\'enyi
endpoint.

First, we express the weighted binary testing score as the least
normalized positive Choi slack. A fixed-marginal de Finetti reduction
\cite[Corollary 1.1]{NaharEtAl2024} turns this slack into a mixture of
IID channel Choi operators with only polynomial loss. Contracting the
resulting order inequality with a general tester gives the weighted-score
bound used in the converse (\Cref{sec:reduction}). The polynomial factor
has no effect on exponential rates. An explicit conversion of general
tests to parallel tests is given in \Cref{app:parallelization}.

Second, we prove continuity at order one of the regularized sandwiched
R\'enyi channel divergence (\Cref{thm:endpoint}). An elementary supported-inverse
factorization lifts the same positive slack to one amplitude approximation
valid for every input. A single tensor moment estimate then controls
arbitrarily entangled probes. Combining approximations at two costs into
three amplitudes closes the endpoint. Together, the testing reduction
and this endpoint identity prove the main theorem.

With the Stein theorem established, we develop two consequences.
First, sharpening the finite-order testing bound and combining it with
parallel achievability from \cite[Theorem 5.5]{FawziFawzi2021} gives the
exact strong-converse exponent for general testers (\Cref{sec:finite}).
Second, the same positive-slack viewpoint gives a finite-copy connection
between testing and smoothing. The smooth-entropy framework for
finite-blocklength quantum tasks \cite{Tomamichel_2013}, its connection
to hypothesis testing \cite{DattaMosonyiHsiehBrandao2013}, and the
modified state duality of \cite{RegulaLamiDatta2026} provide context.
Parallel testers are exactly dual to
smoothing over all channel Choi operators; general testers are dual
to the positive part of the affine hull of product channel Choi
operators (\Cref{thm:one-shot-duality}). These sets, convex product
mixtures, and mixtures of identical tensor powers all have the
fixed-budget asymptotic limit $D^\infty(\cN\Vert\cM)$
(\Cref{thm:smoothing-aep}).
The finite-copy identities follow from SDP duality, while the AEP uses
the endpoint theorem to make the positive slack exponentially small.


\section{Channel discrimination}
\label{sec:operational}
\label{sec:setup}

All systems are finite-dimensional and all logarithms are natural. Write
$\cL(A)$ and $\cD(A)$ for the operators and states on $A$, and $d_A=\dim A$.
Channels are completely positive, trace-preserving (CPTP) maps.
For a Hermitian $X$, $X_+$ is its positive part; $\norm{X}_\infty$ denotes
the operator norm. We write $\psi=\ket\psi\bra\psi$ for a pure state.
We use the unnormalized
Choi operator
\begin{equation}
J^{\cC}_{AB}=(\id_A\otimes\cC)(\ket\Gamma\bra\Gamma),\qquad
\ket\Gamma=\sum_{i=1}^{d_A}\ket i_A\ket i_{A'},\qquad
\Tr_B J^{\cC}=\1_A.
\label{eq:choi}
\end{equation}
Here $A'\simeq A$, and transposes refer to this fixed basis. Tensor factors
are grouped as $A^nB^n$ unless chronological slot order is displayed.

For states, the hypothesis-testing relative entropy is
\begin{equation}
D_H^\varepsilon(\rho\Vert\sigma)
=-\log\inf_{\substack{0\le\Lambda\le\1\\
    \Tr[(\1-\Lambda)\rho]\le\varepsilon}}
\Tr(\Lambda\sigma),\qquad 0\le\varepsilon<1.
\label{eq:state-testing}
\end{equation}
An $n$-slot binary tester is a pair of positive operators $T_1,T_2$ on
$A^nB^n$. Outcome 1 accepts $\cN$; outcome 2 accepts $\cM$.
With the unnormalized Choi convention \eqref{eq:choi}, the two error
probabilities are
\begin{equation}
\Tr[T_2(J^{\cN})^{\otimes n}]
\quad\hbox{(Type I)},\qquad
\Tr[T_1(J^{\cM})^{\otimes n}]
\quad\hbox{(Type II)}.
\label{eq:tester-errors}
\end{equation}
The admissible normalization of $T_1+T_2$ specifies the strategy
\cite{ChiribellaEbler2016}. We use the following explicit definitions.

\begin{definition}[Parallel strategy]
For $0\le\varepsilon<1$, set
\begin{equation}
\begin{aligned}
    \beta_\varepsilon^{\parr}(\cN^{\otimes n},\cM^{\otimes n})
    =\inf\;&\Tr[T_1(J^{\cM})^{\otimes n}]\\
    \text{subject to }&T_1,T_2\ge0,\quad
    \Tr[T_2(J^{\cN})^{\otimes n}]\le\varepsilon,\\
    &T_1+T_2=\sigma_{A^n}\otimes\1_{B^n},\quad
    \sigma_{A^n}\ge0,\quad\Tr\sigma_{A^n}=1.
\end{aligned}
\label{sdp:parallel}
\end{equation}
The state $\sigma$ is the tester normalizer; its transpose is the marginal
of the physical channel input. Indeed, the probe
$(\sqrt\sigma\otimes\1)\ket\Gamma^{\otimes n}$ realizes this convention.
\end{definition}

\begin{definition}[Adaptive strategy]
For $0\le\varepsilon<1$, set
\begin{equation}
\begin{aligned}
    \beta_\varepsilon^{\ada}(\cN^{\otimes n},\cM^{\otimes n})
    =\inf\;&\Tr[T_1(J^{\cM})^{\otimes n}]\\
    \text{subject to }&T_1,T_2\ge0,\quad
    \Tr[T_2(J^{\cN})^{\otimes n}]\le\varepsilon,\\
    &T_1+T_2=R^{(n)}\otimes\1_{B_n},\quad
    R^{(k)}\ge0\quad(1\le k\le n),\\
    &\Tr_{A_k}R^{(k)}=R^{(k-1)}\otimes\1_{B_{k-1}}
    \quad(2\le k\le n),\\
    &\Tr R^{(1)}=1.
\end{aligned}
\label{eq:comb}
\end{equation}
The slots are used in order from 1 to $n$, and the intermediate normalizers act on
\[
R^{(k)}\in\cL(A_1B_1\cdots A_{k-1}B_{k-1}A_k).
\]
These are deterministic tester (co-comb) constraints
\cite{ChiribellaDArianoPerinotti2009}; grouped tensor
order $A^nB^n$ is related to chronological order by a fixed permutation.
They describe sequential experiments with arbitrary intermediate channels
and retained quantum memory. At $n=1$ they reduce to the parallel constraint.
\end{definition}

\begin{definition}[General strategy]
For $0\le\varepsilon<1$, set
\begin{equation}
\begin{aligned}
    \beta_\varepsilon^{\gen}(\cN^{\otimes n},\cM^{\otimes n})
    =\inf\;&\Tr[T_1(J^{\cM})^{\otimes n}]\\
    \text{subject to }&T_1,T_2\ge0,\quad
    \Tr[T_2(J^{\cN})^{\otimes n}]\le\varepsilon,\\
    &\Tr[(T_1+T_2)(J^{\cC_1}\otimes\cdots\otimes J^{\cC_n})]=1\\
    &\hspace{1em}\text{for every collection of CPTP maps }
    \cC_1,\ldots,\cC_n:A\to B.
\end{aligned}
\label{eq:normalization}
\end{equation}
This is the positive process-tester model, which permits indefinite causal
order. Normalization is imposed on arbitrary products of local channels,
not on arbitrary joint channels from $A^n$ to $B^n$. We optimize over this
entire mathematical class; no claim that every feasible tester has a
particular circuit realization is needed. The converse therefore also
applies to any physically realizable subclass contained in it.
\end{definition}

For $\mathsf S\in\{\parr,\ada,\gen\}$, define the hypothesis-testing channel relative entropy by
\begin{equation}
D_H^{\varepsilon,\mathsf S}(\cN^{\otimes n}\Vert\cM^{\otimes n})
=-\log\beta_\varepsilon^{\mathsf S}
(\cN^{\otimes n},\cM^{\otimes n}),
\label{eq:roc}
\end{equation}
with $-\log0=+\infty$. Each normalizer gives a total probability of one under
both hypotheses, so the Type-I condition is equivalent to null acceptance
at least $1-\varepsilon$.

\paragraph{Inclusions of the tester classes.}
A parallel normalizer in \eqref{sdp:parallel} satisfies the adaptive
constraints \eqref{eq:comb} with
\[
\Xi^{(k)}=\Tr_{A_{k+1}\cdots A_n}\sigma_{A^n}
\otimes\1_{B_1\cdots B_{k-1}}.
\]
This construction permits entangled $\sigma_{A^n}$.
For an adaptive normalizer, contract the last Choi operator using
$\Tr_{B_n}J^{\cC_n}=\1_{A_n}$, then apply the recursion in
\eqref{eq:comb}. Repeating the contraction leaves $\Tr\Xi^{(1)}=1$.
Thus every adaptive tester obeys \eqref{eq:normalization}, proving
$\parr\subseteq\ada\subseteq\gen$.

The general tester set is compact. Indeed, inserting the completely
depolarizing channel in every slot gives
$\Tr(T_1+T_2)=d_B^n$, which bounds both positive effects.
The normalization conditions are closed linear constraints.
Parallel testers form a compact set as well, by
\eqref{sdp:parallel}. Their fixed-error minima are therefore attained. In proofs we write $W=T_1+T_2$ for the deterministic normalizer.
The \emph{acceptance pair} $(a,b)$ of a tester $(T_1,T_2)$ consists of
its probabilities of accepting the null hypothesis $\cN$ when the
inserted channel is $\cN$ and $\cM$, respectively:
\begin{equation}
a=\Tr[T_1(J^{\cN})^{\otimes n}],\qquad
b=\Tr[T_1(J^{\cM})^{\otimes n}].
\label{eq:probabilities}
\end{equation}
Thus, $1-a$ is the Type-I error and $b$ is the Type-II error.

\paragraph{The achievable rate.} Repeating a fixed block probe turns channel discrimination into state discrimination. Its rate is the relative entropy of the two output
states. For states with $\supp\rho\subseteq\supp\sigma$, define
\begin{equation}
D(\rho\Vert\sigma)=\Tr\rho(\log\rho-\log\sigma),
\label{eq:state-relative-entropy}
\end{equation}
and set it to $+\infty$ otherwise. Optimizing this rate over the probe
gives the reference-assisted channel divergence. With $\mathbb D=D$, set
\begin{equation}
\mathbb D(\cN\Vert\cM)
=\sup_{\substack{\psi_{RA}\ \mathrm{pure}\\R\simeq A}}
\mathbb D\bigl((\id_R\otimes\cN)(\psi)\Vert
(\id_R\otimes\cM)(\psi)\bigr).
\label{eq:channel-divergence}
\end{equation}
Purification, Schmidt compression, and data processing justify this
reference dimension. Product inputs and additivity of state relative entropy
make the block sequence superadditive. Consequently, Fekete's lemma gives
\begin{equation}
\mathbb D^\infty(\cN\Vert\cM)
=\lim_{k\to\infty}\frac1k\mathbb D(\cN^{\otimes k}\Vert\cM^{\otimes k})
=\sup_{k\ge1}\frac1k\mathbb D(\cN^{\otimes k}\Vert\cM^{\otimes k}).
\label{eq:regularization}
\end{equation}
In particular, the same limit holds along the multiples of any fixed
blocklength. For positive operators, the max-relative entropy is
$\Dmax(\rho\Vert\sigma)=\log\inf\{c\ge0:\rho\le c\sigma\}$,
with value $+\infty$ when no finite $c$ exists. Its channel version is
\begin{equation}
\Dmax(\cN\Vert\cM)=\Dmax(J^{\cN}\Vert J^{\cM})
=\log\inf\{c\ge0:J^{\cN}\le cJ^{\cM}\}.
\label{eq:rates}
\end{equation}
It is finite exactly when $\supp J^{\cN}\subseteq\supp J^{\cM}$. Choi order is equivalent to complete-positive order, so this value bounds the max-relative entropy of the outputs for every input and reference.
The maximally entangled probe attains it because its outputs are
$J^{\cN}/d_A$ and $J^{\cM}/d_A$. Thus, it also agrees with the reference-assisted output definition of channel max-relative entropy.

\begin{theorem}[Channel Stein theorem and exponential strong converse]
\label{thm:stein}
For every pair of finite-dimensional CPTP maps $\cN,\cM:A\to B$, every
fixed $\varepsilon\in(0,1)$, and every
$\mathsf S\in\{\parr,\ada,\gen\}$,
\begin{equation}
\lim_{n\to\infty}\frac1nD_H^{\varepsilon,\mathsf S}(\cN^{\otimes n}\Vert\cM^{\otimes n})
=D^\infty(\cN\Vert\cM).
\label{eq:stein}
\end{equation}
Equality holds also when the rate is infinite. If the rate is finite,
then for every $r>D^\infty(\cN\Vert\cM)$ there exist $c>0$ and $n_0$,
depending only on the channels and $r$, such that every general tester
with $b\le\e^{-nr}$ satisfies $a\le\e^{-cn}$ for all $n\ge n_0$.
\end{theorem}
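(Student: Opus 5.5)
The proof will establish the two inequalities separately, then the exponential strong converse. Because $\parr\subseteq\ada\subseteq\gen$, the feasible sets grow along this chain, so $\beta^{\gen}_\varepsilon\le\beta^{\ada}_\varepsilon\le\beta^{\parr}_\varepsilon$. It therefore suffices to prove achievability for parallel testers and the converse for general testers.

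\textbf{Achievability.} Fix a block length $k$ and a pure probe $\psi_{RA^k}$. Use $\psi^{\otimes m}$ as a parallel probe on $n=mk$ slots. This is a parallel tester whose normalizer $\sigma$ is determined by the marginal of the probe. The task then reduces to testing the IID states $\rho^{\otimes m}$ against $\tau^{\otimes m}$, where
\[
\rho=(\id_R\otimes\cN^{\otimes k})(\psi),\qquad \tau=(\id_R\otimes\cM^{\otimes k})(\psi).
\]
The quantum Stein lemma of Hiai--Petz and Ogawa--Nagaoka gives
\[
\liminf_m \tfrac1m D_H^\varepsilon(\rho^{\otimes m}\Vert\tau^{\otimes m})\ge D(\rho\Vert\tau).
\]
This also covers the infinite case, where supports can be separated with zero type-II error. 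Non-multiples of $k$ are handled by padding: extra slots are filled with an arbitrary probe and then ignored, which only costs $O(1)$ slots. Dividing by $k$, optimizing $\psi$, and letting $k\to\infty$ via \eqref{eq:regularization} yields $\liminf\frac1nD_H^{\varepsilon,\parr}\ge D^\infty(\cN\Vert\cM)$.

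\textbf{Converse and strong converse.} This part only needs to be done when $D^\infty<\infty$. The plan uses the two ingredients announced in the introduction. The first is the weighted-score testing reduction of \Cref{sec:reduction}. Write the optimal weighted score $a-\e^{\lambda n}b$ of a general tester as a least normalized positive Choi slack. The fixed-marginal de Finetti reduction then replaces the slack by a mixture of IID channel Choi operators at polynomial cost. Contracting with the general tester's normalizer, using only its normalization on products of channels, should give a bound of the form
\[
a\le \mathrm{poly}(n)\,\e^{-n\,\frac{p-1}{p}\,(r-\sandD_p^\infty(\cN\Vert\cM))}\quad\text{whenever } b\le\e^{-nr},\ p>1.
\]
This is the finite-order testing bound: a Rényi-type data-processing estimate for general testers, matching the parallel/adaptive bound of \cite{FawziFawzi2021} up to polynomial factors. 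The second ingredient is the endpoint continuity $\inf_{p>1}\sandD_p^\infty(\cN\Vert\cM)=D^\infty(\cN\Vert\cM)$ from \Cref{thm:endpoint}.

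Given $r>D^\infty$, the endpoint theorem supplies some $p>1$ with $\sandD_p^\infty<r$. Then $c=\tfrac{p-1}{2p}(r-\sandD_p^\infty)>0$ absorbs the polynomial factor for $n\ge n_0$, proving $a\le\e^{-cn}$. Both $c$ and $n_0$ depend only on the channels and $r$. Since $a\ge1-\varepsilon$ is required for fixed $\varepsilon<1$, no general tester can reach exponent $r$ for large $n$. Hence $\limsup\frac1nD_H^{\varepsilon,\gen}\le r$ for all $r>D^\infty$, and with the chain of inequalities the limits coincide for all three classes.

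\textbf{Main obstacle.} I expect the main difficulty to be the endpoint identity. Lower semicontinuity of regularized quantities at $p\to1^+$ is not automatic, because the limits in $p$ and $k$ need not commute. The plan there follows the introduction. First, build a supported-inverse factorization that turns the positive slack into a single uniform amplitude approximation valid for every entangled input. Next, control the resulting tensor moments uniformly in the probe. Finally, interpolate between two cost levels to pass from a quantity near $D$ to $\sandD_p$ for $p$ slightly above one, uniformly in the block length. The second, secondary difficulty is making the de Finetti step compatible with general (non-causal) normalizers. This is why the reduction must use only normalization on product Choi operators.
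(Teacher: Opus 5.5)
Your proposal is correct and follows the paper's proof. The paper also uses parallel block-probe achievability via the state Stein lemma, with the kernel-projection test for the unsupported case. Its converse likewise dualizes the weighted score into a normalized Choi slack and lifts it by fixed-marginal de Finetti to an IID mixture that any general tester contracts to at most one. It then bounds the score by binary sandwiched-R\'enyi data processing and closes with \Cref{thm:endpoint}.

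Two small adjustments:
\begin{itemize}
\item The slack should be that of the \emph{parallel} score $\Delta_n(\lambda)$, not of a general tester's score. Binary R\'enyi data processing applies to the parallel score, and de Finetti is exactly what transfers that bound to general testers. Dualizing the general score directly would leave you needing a R\'enyi bound for general testers, which is the thing to be proved.
\item Your $\lambda$-optimized bound $a\le g_n^{1/p}\e^{-n\frac{p-1}{p}(r-\sandD_p^\infty)}$ is the sharpened form from \Cref{sec:finite}. The paper's proof of this theorem instead uses the two-term bound $a\le\e^{-n(r-t)}+g_n\e^{-n\eta_t}$ at an intermediate rate $t\in(D^\infty,r)$, which serves equally well here.
\end{itemize}
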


The restriction $\varepsilon>0$ is necessary in general. For example,
for replacer channels with distinct full-rank commuting outputs $\rho$
and $\sigma$, zero Type-I error forces acceptance on the full support
of $\rho^{\otimes n}$, giving $D_H^{0,\parr}=0$ despite
$D(\rho\Vert\sigma)>0$.

The theorem identifies both the best fixed-error rate and the behavior
above it: demanding a larger Type-II exponent forces the probability
of correctly accepting the null channel to vanish exponentially, even
for general processes. We first prove parallel achievability and reduce
the general converse to a scalar testing bound in \Cref{sec:reduction}.
\Cref{sec:amplitudes} identifies its threshold and completes the proof.

\section{The exact one-shot duality}
\label{sec:one-shot}
We identify the smoothing set dual to each tester normalization. This section is independent of the asymptotic converse: its duality will later give the lower bound in the smoothing AEP. Throughout this section, use the Choi abbreviations
$N_n=(J^{\cN})^{\otimes n}$ and $M_n=(J^{\cM})^{\otimes n}$.

Let
\[
\mathcal P_n=
\{J^{\cC_1}\otimes\cdots\otimes J^{\cC_n}:
\cC_1,\ldots,\cC_n:A\to B\ \mathrm{CPTP}\}.
\]
We distinguish the following sets of Choi operators:
\begin{equation}
\begin{aligned}
    \mathcal K_{\mathrm{all},n}
    &=\{Q\ge0:\Tr_{B^n}Q=\1_{A^n}\},\\
    \mathcal K_{\mathrm{aff},n}
    &=\operatorname{aff}_{\mathbb R}(\mathcal P_n)\cap\{Q:Q\ge0\},\\
    \mathcal K_{\mathrm{prod},n}
    &=\operatorname{conv}(\mathcal P_n),\\
    \mathcal K_{\mathrm{iid},n}
    &=\operatorname{conv}\{(J^{\cC})^{\otimes n}:
    \cC:A\to B\ \mathrm{CPTP}\}.
\end{aligned}
\label{eq:smoothing-sets}
\end{equation}
Here $\operatorname{aff}_{\mathbb R}$ denotes the real affine hull.
These nonempty convex compact sets satisfy
\begin{equation}
\mathcal K_{\mathrm{iid},n}\subseteq
\mathcal K_{\mathrm{prod},n}\subseteq
\mathcal K_{\mathrm{aff},n}\subseteq
\mathcal K_{\mathrm{all},n}.
\label{eq:smoothing-inclusions}
\end{equation}
Indeed, every affine combination in \eqref{eq:smoothing-sets} has
partial trace $\1_{A^n}$. Its positive elements therefore lie in
the compact set $\mathcal K_{\mathrm{all},n}$, and the affine hull
is closed in finite dimensions. The other two sets are convex hulls
of compact sets in a finite-dimensional space, and hence are compact.
The affine constraint is a linear normalization constraint; it does
not impose separability across slots.
The IID set consists of mixtures of identical tensor powers; the
mixture itself need not be a product operator.

For $\bullet\in\{\mathrm{all},\mathrm{aff},\mathrm{prod},\mathrm{iid}\}$
and $0\le\delta<1$, define the modified smooth max-relative entropy by
\begin{equation}
\sandD_{\max,\bullet}^{\delta}
(\cN^{\otimes n}\Vert\cM^{\otimes n})
=\log\inf\{\lambda\ge0:
N_n\le\lambda M_n+\delta Q,\quad Q\in\mathcal K_{\bullet,n}\},
\label{eq:modified-smoothing}
\end{equation}
with value $+\infty$ if no finite $\lambda$ is feasible.
The budget $\delta$ multiplies a normalized Choi operator in the
specified set and therefore measures an additive positive correction.
The definition does not require a nearby CPTP approximation or impose
a distance constraint on separately smoothed output states.
This convention extends the modified smoothing
used in the state duality of \cite{RegulaLamiDatta2026}.
Taking traces shows that every feasible $\lambda$ obeys
$\lambda\ge1-\delta$. Under Choi support inclusion, direct domination
also gives
\begin{equation}
1-\delta\le
\exp\!\left[\sandD_{\max,\bullet}^{\delta}
(\cN^{\otimes n}\Vert\cM^{\otimes n})\right]
\le\e^{n\Dmax(\cN\Vert\cM)}.
\label{eq:smoothing-compact-bounds}
\end{equation}
The infimum is then attained: restrict $\lambda$ to this compact
interval and use compactness of the Choi set. In particular, the
entropy may be negative. The set inclusions imply, at the same
arguments and budget,
\begin{equation}
\sandD_{\max,\mathrm{all}}^\delta
\le\sandD_{\max,\mathrm{aff}}^\delta
\le\sandD_{\max,\mathrm{prod}}^\delta
\le\sandD_{\max,\mathrm{iid}}^\delta.
\label{eq:smoothing-order-chain}
\end{equation}

Similar to the state case, we have the following duality between the hypothesis testing relative entropy and the modified smooth max-relative entropy.

\begin{theorem}[Exact one-shot duality]
\label{thm:one-shot-duality}
Let $\cN,\cM:A\to B$ be finite-dimensional CPTP maps satisfying
$\supp J^{\cN}\subseteq\supp J^{\cM}$.
For every $n\ge1$ and $q\in(0,1)$,
\begin{align}
    &D_H^{1-q,\parr}(\cN^{\otimes n}\Vert\cM^{\otimes n})=\inf_{0\le\delta<q}
    \left\{\sandD_{\max,\mathrm{all}}^\delta
    (\cN^{\otimes n}\Vert\cM^{\otimes n})-\log(q-\delta)\right\},
    \label{eq:one-shot-par}\\
    &D_H^{1-q,\gen}(\cN^{\otimes n}\Vert\cM^{\otimes n})=\inf_{0\le\delta<q}
    \left\{\sandD_{\max,\mathrm{aff}}^\delta
    (\cN^{\otimes n}\Vert\cM^{\otimes n})-\log(q-\delta)\right\}.
    \label{eq:one-shot-gen}
\end{align}
Here $q$ is the required null acceptance and $\delta$ is the
smoothing budget.
\end{theorem}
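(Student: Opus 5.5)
The plan is to write each $\beta^{\mathsf S}_{1-q}$ as a semidefinite program, take its Lagrange dual, and reparametrize the dual variables so that the dual value becomes $\sup_{\delta}(q-\delta)/\lambda(\delta)$, where $\lambda(\delta)$ is the optimal smoothing coefficient. Taking $-\log$ then gives both identities. The Type-I constraint is $\Tr[T_2N_n]\le 1-q$. Since $\Tr[WN_n]=1$, it is equivalent to $\Tr[T_1N_n]\ge q$. The constraint $T_2\ge0$ is equivalent to $T_1\le W$. So for $\mathsf S=\parr$ the primal reads
\[
\beta=\min\{\Tr[T_1M_n]:\ 0\le T_1\le\sigma_{A^n}\otimes\1_{B^n},\ \Tr[T_1N_n]\ge q,\ \sigma\ge0,\ \Tr\sigma=1\}.
\]
For $\mathsf S=\gen$ the constraint $\sigma\otimes\1$ is replaced by $W\in V_n:=\{W:\Tr[WP]=1\ \forall P\in\mathcal P_n\}$ together with $T_1\le W$. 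The condition $W\ge0$ is automatic from $0\le T_1\le W$.

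\emph{Parallel case.} Introduce multipliers $\mu\ge0$ for $\Tr[T_1N_n]\ge q$, $Y\ge0$ for $W-T_1\ge0$, and $t\in\mathbb R$ for $\Tr\sigma=1$. Minimizing the Lagrangian over $T_1\ge0$ and $\sigma\ge0$ should give the dual
\[
\beta=\sup\{\mu q-t:\ \mu N_n\le M_n+Y,\ Y\ge0,\ \Tr_{B^n}Y\le t\1_{A^n}\}.
\]
The inequality $\Tr_{B^n}Y\le t\1$ can be upgraded to equality by adding the positive operator $(t\1-\Tr_{B^n}Y)\otimes\1_{B^n}/d_B^n$ to $Y$. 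This keeps $\mu N_n\le M_n+Y$. Hence $Y=tQ$ with $Q\in\mathcal K_{\mathrm{all},n}$.

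Only $\mu>0$ can give a positive value, and $\beta>0$ here: support inclusion gives $\Tr[T_1M_n]\ge \e^{-n\Dmax}\Tr[T_1N_n]\ge \e^{-n\Dmax}q$. So we may set $\lambda=1/\mu$ and $\delta=t/\mu$. The dual objective becomes $(q-\delta)/\lambda$, subject to $N_n\le\lambda M_n+\delta Q$ with $Q\in\mathcal K_{\mathrm{all},n}$. Only $\delta<q$ contributes a positive value, and $q<1$ keeps $\delta$ in the admissible range $[0,1)$. Optimizing $\lambda$ at fixed $\delta$ gives $\exp\sandD^{\delta}_{\max,\mathrm{all}}$, which is attained by the remark after \eqref{eq:smoothing-compact-bounds}. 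Taking $-\log$ then yields \eqref{eq:one-shot-par}.

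Strong duality comes from Slater's condition. Take $\sigma=\1_{A^n}/d_A^n$ and $T_1=s\,\sigma\otimes\1$ with $s\in(q,1)$. Then $T_1$ and $W-T_1$ are positive definite, and $\Tr[T_1N_n]=s>q$, so the primal is strictly feasible. The primal optimum is attained by compactness.

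\emph{General case.} The same Lagrangian is used, except that $W$ now ranges over the affine space $V_n$. The term $-\Tr[YW]$ is bounded below on $V_n$ only if $Y$ is orthogonal to the direction space $V_n-V_n=\operatorname{span}(\mathcal P_n)^{\perp}$, that is, only if $Y\in\operatorname{span}_{\mathbb R}(\mathcal P_n)$. For such $Y=\sum_i c_iP_i$ we get $\Tr[YW]=\sum_ic_i=:t$ for every $W\in V_n$. If $t>0$, then $Q=Y/t$ is an affine combination of elements of $\mathcal P_n$ and is positive, so $Q\in\mathcal K_{\mathrm{aff},n}$. If $t\le0$, taking the trace against $\1_{A^n}\otimes\1_{B^n}/d_B^n\in V_n$ forces $Y=0$.

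The dual is therefore $\sup\{\mu q-t:\ \mu N_n\le M_n+tQ,\ Q\in\mathcal K_{\mathrm{aff},n}\}$. The reparametrization from the parallel case then gives \eqref{eq:one-shot-gen}. The same Slater point works here, because $\1_{A^n}/d_A^n\otimes\1_{B^n}\in V_n$.

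I expect the main obstacle to be this linear-algebra step. One has to check that restricting $W$ to the affine space $V_n$ dualizes exactly to positive elements of $\operatorname{aff}_{\mathbb R}(\mathcal P_n)$, rather than to a cone or to the full span. The key points are that $0\notin\operatorname{aff}(\mathcal P_n)$, since every element has trace $d_A^n$, and the careful handling of $t\le0$. Boundary cases also need care: the dual value can be approached only as $\delta\to q$ or $\mu\to0$, so $\mu\to0$ must be excluded using $\beta>0$.
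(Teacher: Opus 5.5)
Your proposal is correct and is essentially the paper's proof. The shared steps are: Lagrange duality with the Slater point $\sigma=\1/d_A^n$, $T_1=s\,\sigma\otimes\1$; identifying the dual slack as $tQ$ with $Q\in\mathcal K_{\mathrm{all},n}$ (parallel) or $Q\in\mathcal K_{\mathrm{aff},n}$ (general, via $Y\in\operatorname{span}_{\mathbb R}\mathcal P_n$); and the substitution $\lambda=1/\mu$, $\delta=t/\mu$ after discarding $\mu=0$ using $\beta\ge q\,\e^{-n\Dmax(\cN\Vert\cM)}>0$. The only real difference is cosmetic: you keep $\sigma\ge0$ and pad the slack to partial-trace equality, whereas the paper drops that redundant constraint and obtains $\Tr_{B^n}Z=v\1_{A^n}$ directly.

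One slip to fix: $\1_{A^n}\otimes\1_{B^n}/d_B^n$ lies in $V_n$ only when $d_A=d_B$. Use $\1/d_A^n$ instead, as in your Slater point. Alternatively, note that $\Tr_{B^n}Y=t\1_{A^n}$ for every $Y\in\operatorname{span}_{\mathbb R}\mathcal P_n$. Either way you get $t\ge0$, and $Y=0$ when $t=0$.
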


\begin{proof}
Write $\beta_n^{\mathsf S}(q)=
\beta_{1-q}^{\mathsf S}(\cN^{\otimes n},\cM^{\otimes n})$ within this proof. For parallel testers, $\sigma\ge0$ follows from
$T_1+T_2=\sigma\otimes\1\ge0$ and need not be imposed separately.
Treating $\sigma$ as Hermitian, the SDP dual of \eqref{sdp:parallel} is
\begin{equation}
 \beta_n^{\parr}(q)=
 \sup_{\substack{u\ge0,\ Z\ge0,\ v\in\mathbb R\\
             uN_n\le M_n+Z,\ \Tr_{B^n}Z=v\1_{A^n}}}(uq-v).
 \label{eq:dual-par-sdp}
\end{equation}
Indeed, use multipliers $u\ge0$ for the null-success constraint,
$Z=Z^\dagger$ for $T_1+T_2=\sigma\otimes\1$, and $v$ for
$\Tr\sigma=1$. Minimizing over $T_1,T_2\ge0$ gives
$M_n+Z-uN_n\ge0$ and $Z\ge0$; minimizing over Hermitian $\sigma$
gives the partial-trace equality. For $q<\xi<1$, the point
$\sigma=\1/d_A^n$, $T_1=\xi\1/d_A^n$, $T_2=(1-\xi)\1/d_A^n$
is strictly feasible, so strong duality holds. Positivity forces
$v\ge0$. If $v>0$, then $Z=vQ$ with
$Q\in\mathcal K_{\mathrm{all},n}$; if $v=0$, then $Z=0$.
No completion of a subnormalized slack is required.

\smallskip
For general testers, let $\Delta$ be the completely depolarizing
channel, put $P_0=(J^\Delta)^{\otimes n}$, and define
\[
\mathcal V_n=\operatorname{span}_{\mathbb R}
\{P-P_0:P\in\mathcal P_n\}.
\]
Choose a Hermitian real basis $F_1,\ldots,F_m$ of $\mathcal V_n$.
General normalization is equivalent to
$\Tr WP_0=1$ and $\Tr WF_j=0$ for all $j$, where $W=T_1+T_2$.
The Lagrangian multipliers for these equalities give
$Z=vP_0+\sum_jz_jF_j$. Minimization over positive $T_1,T_2$
therefore yields
\begin{equation}
    \beta_n^{\gen}(q)=
    \sup_{\substack{u\ge0,\ Z\ge0,\ v\in\mathbb R\\
            uN_n\le M_n+Z,\ Z\in vP_0+\mathcal V_n}}(uq-v).
    \label{eq:dual-gen-sdp}
\end{equation}
The same strictly positive $T_1,T_2$ satisfy every general
normalization equality, so strong duality again holds.
Each $F_j$ has zero partial trace on $B^n$. Consequently,
$\Tr_{B^n}Z=v\1$, which implies $v\ge0$. If $v>0$, then
\[
Z/v\in(P_0+\mathcal V_n)\cap\{Q:Q\ge0\}
=\mathcal K_{\mathrm{aff},n}.
\]
If $v=0$, then $Z=0$, and any element of this set represents
the zero term. This identifies the affine dual set without
assuming a convex product decomposition.

\smallskip
In either case we have obtained
\[
\beta_n^{\mathsf S}(q)=
\sup_{\substack{u,v\ge0,\ Q\in\mathcal K_{\bullet,n}\\
        uN_n\le M_n+vQ}}(uq-v),
\]
with $(\mathsf S,\bullet)=(\parr,\mathrm{all})$ or
$(\gen,\mathrm{aff})$. Choi domination implies
\[
\beta_n^{\mathsf S}(q)\ge
q\,\e^{-n\Dmax(\cN\Vert\cM)}>0.
\]
Thus candidates with nonpositive objective may be discarded.
Every remaining candidate has $u>0$, and the invertible change
of variables
\[
\lambda=1/u,\qquad \delta=v/u
\]
gives $\lambda>0$, $0\le\delta<q$, and
$N_n\le\lambda M_n+\delta Q$, with objective $(q-\delta)/\lambda$.
Conversely, every such feasible smoothing candidate gives a dual
candidate by the inverse substitution. Optimizing $\lambda$ at each
$\delta$ and taking the negative logarithm proves both identities.
The endpoint $\delta=0$ is included throughout; no inverse or
logarithm of zero is taken.
\end{proof}

\section{From testing scores to positive Choi bounds}
\label{sec:reduction}

\paragraph{Parallel achievability.}
The lower bound in \Cref{thm:stein} needs only parallel experiments.
Fix a blocklength $k$ and a pure reference-assisted
probe for $\cN^{\otimes k}$ and $\cM^{\otimes k}$. Repeating this probe
$\lfloor n/k\rfloor$ times and ignoring the remaining uses gives
independent output pairs. The state Stein lemma
\cite{OgawaNagaoka2000} achieves their relative entropy per block
at every fixed $\varepsilon\in(0,1)$. Since
$\lfloor n/k\rfloor/n\to1/k$, optimizing the probe and then $k$,
and using $\parr\subseteq\ada\subseteq\gen$, gives
\begin{equation}
\liminf_{n\to\infty}\frac1n
D_H^{\varepsilon,\mathsf S}(\cN^{\otimes n}\Vert\cM^{\otimes n})
\ge D^\infty(\cN\Vert\cM),\qquad
\mathsf S\in\{\parr,\ada,\gen\},\quad\varepsilon\in(0,1).
\label{eq:stein-lower-bound}
\end{equation}
If Choi support inclusion fails, take
$\rho=J^{\cN}/d_A$, $\sigma=J^{\cM}/d_A$, and let $P_0$ project onto
$\ker\sigma$. Then $q_0=\Tr(P_0\rho)>0$. Independent maximally entangled
probes and acceptance whenever at least one output yields $P_0$ have
$b=0$ and $a=1-(1-q_0)^n\to1$. Thus the operational rate is $+\infty$ for
each fixed $\varepsilon>0$. The one-copy Choi probe also gives
$D(\cN\Vert\cM)=+\infty$, settling the infinite-rate case.

\paragraph{A weighted score for the converse.}
For the upper bound, we must control the null acceptance $a$ when the
alternative acceptance $b$ is exponentially small. Given a penalty
$\lambda>0$, the score $a-\lambda b$ captures this tradeoff. We begin
with parallel tests, whose normalization has a simple SDP description,
and denote their optimal score by
\[
 \Delta_n(\lambda):=\max_{\text{parallel testers}}(a-\lambda b).
\]
Our aim is to transfer this parallel quantity to general testers. We
will prove that every general tester satisfies
\[
 a\le\lambda b+g_n\Delta_n(\lambda),\qquad \log g_n=O(\log n),
\]
with the explicit factor $g_n$ given below. To see why this is useful,
set $\lambda=\e^{nt}$ when $b\le\e^{-nr}$ and $t<r$. The first term is
at most $\e^{-n(r-t)}$; the converse will follow once we show that
$\Delta_n(\e^{nt})$ also decays exponentially for every $t>D^\infty$.

Put $N_n=(J^{\cN})^{\otimes n}$ and
$M_n=(J^{\cM})^{\otimes n}$. In Choi form, the optimal parallel score is
\begin{equation}
 \Delta_n(\lambda)=\max_{\substack{\Tr\sigma=1\,,\ \sigma=\sigma^\dagger\\
                 0\le T\le\sigma\otimes\1_{B^n}}}
       \Tr T(N_n-\lambda M_n),\qquad \lambda>0.
 \label{eq:hockey-primal}
\end{equation}
The constraint $\sigma\ge0$ is automatic: $\sigma\otimes\1\ge T\ge0$.
Thus these are exactly the parallel tests. Using the canonical probe
from \eqref{sdp:parallel}, equivalently
\begin{equation}
 \Delta_n(\lambda)=\sup_\psi\Tr\bigl[(\id\otimes\cN^{\otimes n})(\psi)
             -\lambda(\id\otimes\cM^{\otimes n})(\psi)\bigr]_+.
 \label{eq:hockey}
\end{equation}
In particular, $0\le\Delta_n(\lambda)\le1$. This is an auxiliary
slack budget, rather than an additional channel entropy. The next lemma
connects it directly to the modified smooth $D_{\max}$ in
\Cref{sec:one-shot}.

The advantage of this score is that its dual has a direct operator
interpretation: $\Delta_n(\lambda)$ is the smallest normalized positive
correction needed to dominate $N_n$ by $\lambda M_n$.

\begin{lemma}[Normalized testing slack]
\label{lem:testing-slack}
For every $n\ge1$ and $\lambda>0$,
\begin{equation}
 \Delta_n(\lambda)=\min\{h:\ Z\ge0,\ N_n\le\lambda M_n+Z,
                         \ \Tr_{B^n}Z=h\1_{A^n}\}.
 \label{eq:smoothing}
\end{equation}
Equivalently, $\Delta_n(\lambda)$ is the least $h\ge0$ for which
$N_n\le\lambda M_n+hQ$ with $Q$ a channel Choi operator.
An optimal slack can be chosen permutation invariant.
\end{lemma}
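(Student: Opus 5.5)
The plan is to treat \eqref{eq:hockey-primal} as a semidefinite program and compute its dual directly, in the same way the proof of \Cref{thm:one-shot-duality} dualized \eqref{sdp:parallel}.

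\emph{Setting up the dual.} The primal variables are a Hermitian $\sigma$ on $A^n$ and $T$ on $A^nB^n$. The constraints are $T\ge0$, $\sigma\otimes\1-T\ge0$ and $\Tr\sigma=1$, and the objective is $\Tr T(N_n-\lambda M_n)$. I introduce multipliers $Y\ge0$ for $T\ge0$, $Z\ge0$ for $\sigma\otimes\1-T\ge0$, and $h\in\mathbb R$ for $\Tr\sigma=1$, so the Lagrangian is
\[
\Tr T(N_n-\lambda M_n)+\Tr TY+\Tr Z(\sigma\otimes\1-T)+h(1-\Tr\sigma).
\]
Taking the supremum over the free variables gives two conditions. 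Stationarity in $T$ forces $N_n-\lambda M_n+Y-Z=0$. Since $Y\ge0$ is free, this is equivalent to $Z\ge N_n-\lambda M_n$. Stationarity in $\sigma$ forces $\Tr_{B^n}Z=h\1_{A^n}$. The dual is therefore to minimize $h$ over $Z\ge0$ with $N_n\le\lambda M_n+Z$ and $\Tr_{B^n}Z=h\1$. Weak duality can also be checked by hand. For feasible primal and dual points,
\[
\Tr T(N_n-\lambda M_n)\le\Tr TZ\le\Tr(\sigma\otimes\1)Z=\Tr\sigma\,\Tr_{B^n}Z=h.
\]
The first inequality uses $T\ge0$ together with $Z-(N_n-\lambda M_n)\ge0$. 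The second uses $Z\ge0$ together with $\sigma\otimes\1-T\ge0$.

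\emph{Strong duality and attainment.} For strong duality I use Slater's condition on the primal. The point $\sigma=\1/d_A^n$, $T=\1/(2d_A^n)$ is strictly feasible, and the primal value is finite because $\Delta_n(\lambda)\le1$. So the dual optimum equals $\Delta_n(\lambda)$ and is attained. A dual-side check would also work. The choice $Z=N_n$, $h=1$ is feasible, and every feasible $h$ is nonnegative because $h\1=\Tr_{B^n}Z\ge0$. The feasible set is closed, and it is bounded once $h$ is capped at $1$, since $\Tr Z=hd_A^n$. Hence the minimum exists. This justifies writing ``min'' in \eqref{eq:smoothing}.

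\emph{Rewriting in terms of a channel Choi operator.} If $h>0$, set $Q=Z/h$. Then $Q\ge0$ and $\Tr_{B^n}Q=\1$, so $Q$ is the Choi operator of a channel $A^n\to B^n$. If $h=0$, then $\Tr Z=0$ together with $Z\ge0$ gives $Z=0$, so any $Q$ can be used. This gives the stated reformulation as the least $h$ with $N_n\le\lambda M_n+hQ$.

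\emph{Permutation invariance.} Take an optimal $Z$ and average it over the simultaneous permutations $\pi$ of the $n$ slots, acting on $A^nB^n$. Each such unitary satisfies $U_\pi N_nU_\pi^\dagger=N_n$ and $U_\pi M_nU_\pi^\dagger=M_n$. Conjugation by $U_\pi$ preserves positivity. It also maps $\Tr_{B^n}Z=h\1$ to $\Tr_{B^n}U_\pi ZU_\pi^\dagger=h\1$, because $U_\pi$ factors as a permutation on $A^n$ tensored with a permutation on $B^n$. So every conjugate $U_\pi ZU_\pi^\dagger$ is optimal with the same $h$. By convexity of the feasible set, their average is also optimal, and it is permutation invariant.

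I do not expect a real obstacle here. The only points that need care are handling $\sigma$ as Hermitian rather than positive in the Lagrangian, and confirming that the minimum is attained.
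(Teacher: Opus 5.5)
Your proposal is correct and follows essentially the same route as the paper. The paper's proof has the same Lagrangian dual; your extra multiplier $Y$ for $T\ge0$ only makes explicit what the paper handles by taking the supremum over $T\ge0$ directly. It also uses the same Slater point $\sigma=\1/d_A^n$, $T=\1/(2d_A^n)$, the same compactness argument from the feasible point $Z=N_n$, $h=1$, the same case split between $h=0$ (forcing $Z=0$) and $Z=hQ$, and the same joint permutation twirl.
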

\begin{proof}
For multipliers $Z\ge0$ and $h\in\mathbb R$, the Lagrangian of
\eqref{eq:hockey-primal} is
\[
 h+\Tr T(N_n-\lambda M_n-Z)
       +\Tr\sigma(\Tr_{B^n}Z-h\1).
\]
Its supremum over $T\ge0$ and Hermitian $\sigma$ is finite exactly under
the constraints in \eqref{eq:smoothing}. Slater's condition holds at
$\sigma=\1/d_A^n$ and $T=\1/(2d_A^n)$. Moreover, $Z=N_n$, $h=1$ is
dual feasible, and restricting $0\le h\le1$ gives a compact feasible
set because $\Tr Z=d_A^nh$. This proves equality and attainment.
Positivity forces $h\ge0$; if $h=0$, then $Z=0$. Otherwise write
$Z=hQ$. Finally, joint permutation twirling preserves every constraint
and the value of $h$.
\end{proof}

For $0\le\delta<1$ and $\lambda>0$, the lemma and compactness give
\begin{equation}
 \Delta_n(\lambda)\le\delta
 \quad\Longleftrightarrow\quad
 \sandD_{\max,\mathrm{all}}^\delta
 (\cN^{\otimes n}\Vert\cM^{\otimes n})\le\log\lambda.
 \label{eq:slack-dmax-inverse}
\end{equation}
Indeed, a smaller slack $hQ$ can be increased to $\delta Q$ because
$Q\ge0$. Conversely, a feasible modified smoother is a slack of budget
$\delta$. If the smoothing infimum is finite, it is attained even without
support inclusion: restrict $\lambda$ to any finite feasible sublevel
and use compactness of $\mathcal K_{\mathrm{all},n}$.
Thus $\Delta_n$ is simply the inverse budget profile of the same
modified smooth max-relative entropy.

\paragraph{Transferring the score to general testers.}
The normalized slack $Q$ may be the Choi operator of a joint channel
across all $n$ uses. General testers are normalized on products of local
channels, so we need a product-channel bound on $Q$. The permutation
symmetry supplied by \Cref{lem:testing-slack} allows a fixed-marginal
de Finetti theorem to provide such a bound with polynomial loss. Set
\begin{equation}
 g_n=\binom{n+d_A^2d_B^2-1}{n}.
 \label{eq:postselection-factor}
\end{equation}
At fixed $d_A,d_B$, this satisfies $\log g_n=O(\log n)$.
Let $Q\ge0$ obey $\Tr_{B^n}Q=\1_{A^n}$ and be invariant under
simultaneous permutations of its input--output pairs. The state
\[
 \rho_{A^nB^n}=Q/d_A^n,\qquad
 \Tr_{B^n}\rho=(\1_A/d_A)^{\otimes n}
\]
satisfies the hypotheses of the fixed-marginal de Finetti theorem
\cite[Corollary 1.1]{NaharEtAl2024}. It gives a probability measure
on states $\omega_{AB}$ with $\Tr_B\omega=\1_A/d_A$ such that
\[
 \frac{Q}{d_A^n}\le g_n\int\omega_{AB}^{\otimes n}\,\mathrm d\mu(\omega).
\]
Each $d_A\omega$ is the unnormalized Choi operator of a CPTP map.
Multiplying by $d_A^n$ and pushing the measure forward to these maps gives
\begin{equation}
 Q\le g_n\int(d_A\omega_{AB})^{\otimes n}\,\mathrm d\mu(\omega)
   =g_n\int(J^{\cC})^{\otimes n}\,\mathrm d\mu(\cC).
 \label{eq:definetti}
\end{equation}
Thus no additional dimension factor appears. Permutation invariance,
not support in the symmetric subspace, is the required hypothesis.
Applying this to an optimal normalized slack in \Cref{lem:testing-slack}
yields
\begin{equation}
 N_n\le\lambda M_n+g_n\Delta_n(\lambda)Q_{n,\lambda},\qquad
 Q_{n,\lambda}=\int(J^{\cC})^{\otimes n}\,\mathrm d\mu_{n,\lambda}(\cC).
 \label{eq:slack-lift}
\end{equation}
The mixture depends on the channel pair, $n$, and $\lambda$, but not
on the tester. If $\Delta_n(\lambda)=0$, any such mixture may be used.
For a general acceptance
effect $0\le T_1\le W$, normalization gives $\Tr T_1Q_{n,\lambda}\le1$.
Hence every general tester with acceptance pair $(a,b)$ obeys
\begin{equation}
 a\le\lambda b+g_n\Delta_n(\lambda)\qquad(\lambda>0).
 \label{eq:general-score-bound}
\end{equation}
The operator inequality \eqref{eq:slack-lift} gives the converse through
\eqref{eq:general-score-bound} and will later supply the smoother in the AEP. An explicit parallel test with acceptance pair
$(a/g_n,b/g_n)$ is also available; the stronger finite-use construction
is proved separately in \Cref{app:parallelization}.

\subsection{Binary R\'enyi bounds}

A fixed-order binary bound controls $\Delta_n$ above its R\'enyi threshold.
For the sandwiched R\'enyi divergence
\cite{WildeWinterYang2014,MuellerLennertEtAl2013}, we use the following convention.
For $p>1$ and supported state pairs, define
\begin{align}
\sandD_p(\rho\Vert\sigma)&=\frac{1}{p-1}\log\sandQ_p(\rho\Vert\sigma),
&\sandQ_p(\gamma\Vert\sigma)
&=\Tr\bigl(\sigma^{\frac{1-p}{2p}}\gamma
\sigma^{\frac{1-p}{2p}}\bigr)^p\quad(p>1).
\label{eq:divergences}
\end{align}
We use the same R\'enyi trace expression $\sandQ_p$ for positive
$\gamma$ of arbitrary trace, with $\sandQ_p(0\Vert\sigma)=0$.
Inverse powers are taken on $\supp\sigma$; divergences are
$+\infty$ if support inclusion fails.

Use \eqref{eq:channel-divergence} and \eqref{eq:regularization}
also for $\mathbb D=\sandD_p$. The same purification and superadditivity
arguments apply at every fixed $p>1$.
Under Choi support inclusion, data processing for the sandwiched
divergence \cite{Beigi2013,FrankLieb2013} and monotonicity in the order
\cite{MuellerLennertEtAl2013} imply
\Needspace{5\baselineskip}
\begin{align}
0\le D^\infty(\cN\Vert\cM)
&\le\inf_{\alpha>1}\sandD_\alpha^\infty(\cN\Vert\cM)
\le\sandD_p^\infty(\cN\Vert\cM)\nonumber\\
&\le\Dmax(\cN\Vert\cM)<\infty\qquad(p>1).
\label{eq:rate-order}
\end{align}
Monotonicity in the order identifies the displayed infimum with
$\lim_{p\downarrow1}\sandD_p^\infty(\cN\Vert\cM)$, without yet
identifying its value.

Unless explicitly stated otherwise, assume Choi support inclusion
from here to the end of \cref{sec:finite}.
For a parallel acceptance pair, $a>0$ then implies $b>0$. The case
$a=b=0$ contributes zero to the binary moment. For $p>1$, binary
data processing bounds the moment of every parallel test by
\begin{equation}
 a^pb^{1-p}\le
 \exp\bigl[(p-1)\sandD_p(\cN^{\otimes n}\Vert\cM^{\otimes n})\bigr].
 \label{eq:binary-data-processing}
\end{equation}
When $a>\lambda b$, the positive score satisfies
$(a-\lambda b)\lambda^{p-1}\le a^pb^{1-p}$. Maximizing over tests gives
\begin{equation}
 \Delta_n(\lambda)\le\lambda^{1-p}
 \exp\bigl[(p-1)\sandD_p(\cN^{\otimes n}\Vert\cM^{\otimes n})\bigr].
 \label{eq:hockey-renyi}
\end{equation}
Thus $\Delta_n(\e^{nt})$ decays exponentially whenever
$t>\inf_{p>1}\sandD_p^\infty(\cN\Vert\cM)$. The remaining analytic
step is to identify this threshold with the achievable rate
$D^\infty(\cN\Vert\cM)$.

\section{The regularized endpoint and the Stein converse}
\label{sec:amplitudes}

The preceding reduction leaves one question: does the R\'enyi threshold
coincide with the parallel achievable rate? The following identity
answers this question and is the analytic step needed for the converse.

\begin{theorem}[Regularized R\'enyi endpoint]
\label{thm:endpoint}
For finite-dimensional CPTP maps $\cN,\cM:A\to B$ satisfying
$\supp J^{\cN}\subseteq\supp J^{\cM}$,
\begin{equation}
\lim_{p\downarrow1}\sandD_p^\infty(\cN\Vert\cM)
=D^\infty(\cN\Vert\cM).
\label{eq:endpoint}
\end{equation}
\end{theorem}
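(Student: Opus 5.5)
The inequality $\lim_{p\downarrow1}\sandD_p^\infty\ge D^\infty$ is already contained in \eqref{eq:rate-order}, so only the upper bound needs work. I will show that for every $\eta>0$ there is some $p>1$ with $\sandD_p^\infty(\cN\Vert\cM)\le D^\infty(\cN\Vert\cM)+\eta$.

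\emph{Setup and reductions.} Fix a block length $k$ with $\frac1kD(\cN^{\otimes k}\Vert\cM^{\otimes k})\ge D^\infty-\eta/4$, which is possible by \eqref{eq:regularization}. Regularize along multiples $n=mk$, as \eqref{eq:regularization} permits. Choi support inclusion gives the amplitude factorization $N_k^{1/2}=K M_k^{1/2}$ with $\norm{K}_\infty^2\le \e^{k\Dmax(\cN\Vert\cM)}$, where $K=N_k^{1/2}M_k^{-1/2}$ with the inverse taken on $\supp M_k$. For any probe $\psi$ on $R A^{n}$, the outputs are $\rho_\psi=\Tr_{A^n}[(\psi^{T}\otimes\1)N_n]$ and $\sigma_\psi$, and the $K$-factorization transfers to every such input simultaneously. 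This is the ``one amplitude approximation valid for every input.''

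\emph{Splitting the amplitude.} I split $K$ by spectral projections of the block log-likelihood operator into a typical part $K_{\le}$ and an atypical part $K_{>}$, with threshold $\e^{k(D^\infty+\eta/2)}$. The sandwiched trace $\sandQ_p^{1/p}$ is a $p$-norm of $\sigma^{(1-p)/2p}\rho^{1/2}$. The triangle inequality for Schatten norms then bounds $\sandQ_p(\rho_\psi\Vert\sigma_\psi)^{1/p}$ by the sum of two terms:
\begin{itemize}[nosep]
\item a typical term, at most $\e^{(p-1)k m(D^\infty+\eta/2)/p}$ per block structure;
\item an atypical term, at most $\e^{(p-1)km\Dmax/p}$ times the $\psi$-mass carried by $K_{>}^{\otimes}$-components.
\end{itemize}
The ``three amplitudes'' step combines approximations at two thresholds, the typical cost and the $\Dmax$ cost, so that the cross terms can be absorbed.

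\emph{Main obstacle.} The hard step is to bound the atypical mass uniformly over entangled probes across the $m$ blocks. A per-block estimate is easy, but correlated probes could in principle concentrate mass on atypical components. I will try a tensor moment estimate first. The function $\phi_n(p)=(p-1)\sandD_p(\cN^{\otimes n}\Vert\cM^{\otimes n})$ is convex with $\phi_n(1)=0$. I aim to bound its growth by
\[
\phi_{mk}(p)\le(p-1)\,m\,D(\cN^{\otimes k}\Vert\cM^{\otimes k})+(p-1)^2\,m\,C_k,
\]
with $C_k$ depending only on $k$, $\Dmax$ and the dimensions, and in particular independent of the input. To get this, I will expand $\bigl(\sigma^{(1-p)/2p}\rho\,\sigma^{(1-p)/2p}\bigr)^p$ to second order in $p-1$ using the operator $K^{\otimes m}$. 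Since $\norm{K}_\infty$ is uniformly bounded, every second-order term is an $m$-fold sum of one-block terms plus cross terms. The cross terms should be controllable by operator-norm bounds on $K$, because they involve products of bounded per-block factors, and this is exactly where uniformity in entangled inputs has to come from.

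\emph{Conclusion.} Granting the estimate, divide by $(p-1)mk$ and let $m\to\infty$. This gives $\sandD_p^\infty\le D^\infty+\eta/4+(p-1)C_k/k$, so choosing $p-1$ small compared with $k\eta/C_k$ finishes the proof. If the second-order expansion cannot be made uniform, the fallback is the threshold splitting above with a Markov-type bound on the atypical mass, using the state Stein converse applied blockwise.
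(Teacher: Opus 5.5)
Your central estimate cannot hold in general, and it fails at first order in $p-1$, not in the second-order terms. Divide $\phi_{mk}(p)\le(p-1)mD(\cN^{\otimes k}\Vert\cM^{\otimes k})+(p-1)^2mC_k$ by $p-1$ and use $\sandD_p\ge D$. This gives $D(\cN^{\otimes mk}\Vert\cM^{\otimes mk})\le mD(\cN^{\otimes k}\Vert\cM^{\otimes k})+(p-1)mC_k$. Letting $p\downarrow1$ and invoking superadditivity forces $\frac1kD(\cN^{\otimes k}\Vert\cM^{\otimes k})=D^\infty(\cN\Vert\cM)$. Your choice of $k$ guarantees this only up to $\eta/4$, and since the channel relative entropy is not additive in general, the equality can fail.

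The mechanism is the obstacle you flagged yourself. The right derivative of $p\mapsto\log\sandQ_p(\rho_\psi\Vert\sigma_\psi)$ at $p=1$ is $D(\rho_\psi\Vert\sigma_\psi)$. Whenever $D(\cN^{\otimes mk}\Vert\cM^{\otimes mk})>mD(\cN^{\otimes k}\Vert\cM^{\otimes k})$, some probe entangled across the blocks has a larger slope than your bound allows. No bound on $\norm{K}_\infty$ detects this. Moreover, cross-block second-order terms are indexed by pairs of blocks, so norm bounds would naturally give order $m^2$ rather than $mC_k$. Even the corrected statement, with $D^\infty$ in place of $\frac1kD(\cN^{\otimes k}\Vert\cM^{\otimes k})$, would be a quantitative rate of continuity at $p=1$. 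That is stronger than the theorem, and the paper explicitly leaves it open.

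The fallback does not close either, for two reasons.

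First, a spectral cut of a fixed block operator gives no input-uniform smallness of the atypical part. The Rényi estimate needs the domination $XX^\dagger\le\lambda M_k$. The cut compatible with it writes $\sqrt{N_k}=M_k^{1/2}Y$ and projects onto spectral subspaces of $YY^\dagger=M_k^{-1/2}N_kM_k^{-1/2}$; your factorization $N_k^{1/2}=KM_k^{1/2}$ puts $M_k^{1/2}$ on the wrong side for this. Under the maximally entangled probe, this statistic concentrates at the Belavkin--Staszewski rate. For replacer channels with noncommuting full-rank outputs $\rho,\sigma$, that rate strictly exceeds $D(\rho\Vert\sigma)=D^\infty(\cN\Vert\cM)$. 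So for small $\eta$ the atypical mass at threshold $\e^{k(D^\infty+\eta/2)}$ tends to one, not zero.

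Second, the only input-uniform, noncircular information at a single block is the weak converse $a\log(1/b)-\log2\le kD^\infty$. It bounds the error amplitude only by a constant $\delta<1$. A state strong converse would need IID repetitions of one output pair, which entangled probes destroy. With only two costs (typical and $\Dmax$), a nonvanishing error charged at $\Dmax$ makes the bound vacuous.

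The paper supplies the missing ideas in three steps.
\begin{enumerate}
\item It builds the approximation from the SDP-optimal normalized slack of \Cref{lem:testing-slack}, lifted by $X=BS^+\sqrt N$ (\Cref{lem:amplitude-splitting}). This gives $\nu_k(\sqrt{N_k}-X_k(t))\le\sqrt{\Delta_k(\e^{kt})}$ for all inputs at once.
\item It handles entangled probes for every $m$ with the multiplicative norm $\nu_k$ and the exact tensor-word Schatten triangle inequality of \Cref{lem:moment}, rather than a Taylor expansion.
\item It argues by contradiction from $D^\infty<R_*:=\inf_{p>1}\sandD_p^\infty(\cN\Vert\cM)$, using two rates. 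At $r\in(D^\infty,R_*)$ the error is at most $\delta<1$; at $v>R_*$ it vanishes by \eqref{eq:hockey-renyi}. The three-amplitude identity \eqref{eq:three-amplitudes} charges the nonvanishing error only at cost $v$, and only a vanishing remainder at $\Dmax$. With $k(p_k-1)/(2p_k)=s$ fixed, this yields $\e^{sR_*}\le(1+\delta)\e^{sr}+\delta\e^{sv}$, which is contradictory after $v\downarrow R_*$ and $s\to\infty$.
\end{enumerate}
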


The lower inequality follows from \eqref{eq:rate-order}; the content is
continuity after input optimization and regularization. If Choi support
inclusion fails, both sides are infinite by the Choi probe. The proof
below proceeds from the testing slack and binary divergence bounds:
we first construct common amplitude approximations, then tensorize them,
and finally compare two approximation rates.

Assume $C:=\Dmax(\cN\Vert\cM)<\infty$, and write
\[
 D_*:=D^\infty(\cN\Vert\cM),\qquad
 R_*:=\inf_{p>1}\sandD_p^\infty(\cN\Vert\cM).
\]
Then $D_*\le R_*\le C$. We show that no strict gap is possible.
The common approximations will be constructed directly from the
optimizer defining $\Delta_k$.

\subsection{A positive slack gives one common amplitude}
\label{sec:common-approximation}
To use the slack across repeated blocks, we need an approximation that
controls all input probes at once. Working with amplitudes of Choi
operators provides this uniform control and makes tensor products
multiplicative. For a matrix $X$ on $A^kB^k$, define the amplitude norm
\begin{equation}
 \nu_k(X):=\bigl\|\Tr_{B^k}XX^\dagger\bigr\|_\infty^{1/2}
 =\sup_{\sigma\in\cD(A^k)}
       \bigl\|(\sqrt\sigma\otimes\1)X\bigr\|_2.
 \label{eq:amplitude-identities}
\end{equation}
The second expression is a supremum of Hilbert--Schmidt seminorms,
so $\nu_k$ satisfies the triangle inequality. Partial trace gives
\[
 \nu_k(\sqrt{N_k})=1,\qquad
 \nu_{k+\ell}(X\otimes Y)=\nu_k(X)\nu_\ell(Y).
\]
If the columns of $X$ are vectorized Kraus operators, $\nu_k(X)$ is
exactly the operator norm of their common-environment dilation.
Thus the norm controls every input and reference before a probe is chosen.

\begin{lemma}[Lifting positive order to amplitudes]
\label{lem:amplitude-splitting}
If $N,B,Z\ge0$ and $N\le B+Z$, there exist matrices $X,F$ such that
\begin{equation}
 \sqrt N=X+F,\qquad XX^\dagger\le B,\qquad FF^\dagger\le Z.
 \label{eq:smoothing-order}
\end{equation}
\end{lemma}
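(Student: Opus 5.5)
The idea is a supported-inverse (Douglas-type) factorization: factor $\sqrt N$ through the total $S:=B+Z$, then split $S$ into its two summands. All inverses and inverse square roots are taken on $\supp S$, and $\Pi$ denotes the projector onto $\supp S$.

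First, $0\le N\le S$ gives $\ker S\subseteq\ker N$, hence $\supp N\subseteq\supp S$ and $\Pi\sqrt N=\sqrt N$. Set
\[
K:=S^{-1/2}\sqrt N .
\]
Conjugating $N\le S$ by $S^{-1/2}$ gives
\[
KK^\dagger=S^{-1/2}NS^{-1/2}\le S^{-1/2}SS^{-1/2}=\Pi\le\1 .
\]
Also $S^{1/2}K=\Pi\sqrt N=\sqrt N$.

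Next, define the two pieces using the full operators $B$ and $Z$ rather than their square roots:
\[
X:=BS^{-1/2}K,\qquad F:=ZS^{-1/2}K .
\]
The first claim in \eqref{eq:smoothing-order} is immediate:
\[
X+F=(B+Z)S^{-1/2}K=S^{1/2}K=\sqrt N .
\]
For the bounds, $KK^\dagger\le\Pi$ gives
\[
XX^\dagger\le BS^{-1/2}\Pi S^{-1/2}B=BS^{-1}B .
\]
Similarly, $FF^\dagger\le ZS^{-1}Z$.

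The remaining step, which is the only one with any content, is to show $BS^{-1}B\le B$. Note $\supp B\subseteq\supp S$, since $0\le B\le S$. The inequality $BS^{-1}B\le B$ is equivalent, after compressing to $\supp B$ and conjugating by $B^{-1/2}$, to
\[
\bigl\|B^{1/2}S^{-1}B^{1/2}\bigr\|_\infty\le1 .
\]
Write $B^{1/2}S^{-1}B^{1/2}=YY^\dagger$ with $Y=B^{1/2}S^{-1/2}$. This has the same norm as $Y^\dagger Y=S^{-1/2}BS^{-1/2}$, and $B\le S$ gives $S^{-1/2}BS^{-1/2}\le\Pi$, so the norm is at most $1$. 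The identical argument applied to $Z\le S$ gives $ZS^{-1}Z\le Z$.

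The main obstacle is choosing the split. The naive choice $X=B^{1/2}S^{-1/2}\sqrt N$ fails: it satisfies the correct bound, but the two pieces do not sum to $\sqrt N$, because $B^{1/2}+Z^{1/2}\neq S^{1/2}$. Using $B$ and $Z$ linearly fixes the sum, and the price is exactly the operator inequality $BS^{-1}B\le B$, which the norm argument above supplies.
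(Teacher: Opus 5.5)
Your proposal is correct and is the paper's proof: $X=BS^{-1/2}K=BS^+\sqrt N$ and $F=ZS^+\sqrt N$ are exactly its supported-inverse pieces, bounded by the same chain $XX^\dagger\le BS^+B\le B$. The only difference is in the last step. You prove $BS^+B\le B$ via $\|YY^\dagger\|_\infty=\|Y^\dagger Y\|_\infty$ after conjugating by $B^{-1/2}$, whereas the paper squares the positive contraction $S^{+1/2}BS^{+1/2}$; both arguments are valid.
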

\begin{proof}
Put $S=B+Z$ and use its supported inverse:
\begin{equation}
 X=BS^+\sqrt N,\qquad F=ZS^+\sqrt N.
 \label{eq:smoother}
\end{equation}
Since $\supp N\subseteq\supp S$, their sum is $\sqrt N$.
Here $S^+$ is the Moore--Penrose inverse and
$S^{+1/2}:=(S^+)^{1/2}$. On $\supp S$, both $S^{+1/2}NS^{+1/2}$ and
$S^{+1/2}YS^{+1/2}$ are positive contractions whenever $0\le Y\le S$.
Squaring the latter gives $YS^+Y\le Y$, while the former gives
$S^+NS^+\le S^+$. Hence
$XX^\dagger\le BS^+B\le B$ and
$FF^\dagger\le ZS^+Z\le Z$.
No commutation or invertibility assumption is used.
\end{proof}

Apply \Cref{lem:testing-slack,lem:amplitude-splitting} with
$B=\e^{kt}M_k$ and an optimal normalized slack
$\Tr_{B^k}Z=\Delta_k(\e^{kt})\1$. For every rate $t$, they construct
one matrix $X_k(t)$ satisfying
\begin{equation}
 X_k(t)X_k(t)^\dagger\le\e^{kt}M_k,\qquad
 \nu_k(\sqrt{N_k}-X_k(t))\le e_k(t),\qquad
 e_k(t):=\sqrt{\Delta_k(\e^{kt})}.
 \label{eq:common-approximation}
\end{equation}
The same matrix works for all probes; $e_k(t)$ is simply a testing
bound, not a separately optimized approximation profile.

\paragraph{The two scalar estimates.}
Binary relative-entropy data processing gives
$a\log(1/b)-\log2\le kD_*$ for every parallel test. A positive score
$a-\e^{kr}b$ implies $b<\e^{-kr}$, so, for $r>D_*$,
\begin{equation}
 \Delta_k(\e^{kr})\le\frac{kD_*+\log2}{kr},\qquad
 \limsup_k e_k(r)\le\sqrt{D_*/r}<1.
 \label{eq:regimes}
\end{equation}
For $v>R_*$, choose a fixed $q>1$ with $\sandD_q^\infty<v$.
Equation \eqref{eq:hockey-renyi} and superadditivity give
\begin{equation}
 \Delta_k(\e^{kv})\le
 \exp\!\left[-k(q-1)\bigl(v-\sandD_q^\infty(\cN\Vert\cM)\bigr)\right],
 \qquad e_k(v)\longrightarrow0.
 \label{eq:vanishing-regime}
\end{equation}
Neither estimate uses the endpoint or a channel strong converse.

\subsection{One tensor estimate for arbitrary entangled probes}
\label{sec:coherent}
We now translate an amplitude decomposition into a R\'enyi bound. Each
term contributes through two quantities: its domination cost relative
to $M_k$ and its amplitude norm. The estimate remains valid for probes
entangled across all repeated blocks.

Write $\|X\|_q=(\Tr|X|^q)^{1/q}$ for the Schatten $q$-norm,
where $|X|=(X^\dagger X)^{1/2}$. For a state $\beta$ and
$0\le\gamma\le\lambda\beta$,
\begin{equation}
 \sandQ_p(\gamma\Vert\beta)\le\lambda^{p-1}\Tr\gamma
 \qquad(p>1).
 \label{eq:renyi-trace-bound}
\end{equation}
For nonzero $\gamma$, apply $\sandD_p\le\Dmax$ to
$\gamma/\Tr\gamma$ and use the homogeneity
$\sandQ_p(c\gamma\Vert\beta)=c^p\sandQ_p(\gamma\Vert\beta)$.
The zero case is immediate; domination guarantees
$\supp\gamma\subseteq\supp\beta$.

\begin{lemma}[Tensor amplitude moment bound]
\label{lem:moment}
Let $\sqrt{N_k}=\sum_iX_i$ be a finite decomposition satisfying
$X_iX_i^\dagger\le\lambda_iM_k$ for $\lambda_i>0$. Then
\begin{equation}
 \exp\!\left[\frac{k(p-1)}{2p}\sandD_p^\infty(\cN\Vert\cM)\right]
 \le\sum_i\lambda_i^{(p-1)/(2p)}\nu_k(X_i)^{1/p},\qquad p>1.
 \label{eq:full-word-bound}
\end{equation}
\end{lemma}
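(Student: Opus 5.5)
The plan is to tensorize the given decomposition over $m$ repeated blocks and to bound the sandwiched R\'enyi quantity of an arbitrary, possibly fully entangled, probe through a triangle inequality in a Schatten norm. Then let $m\to\infty$ along multiples of $k$, as permitted after \eqref{eq:regularization}.

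\emph{Step 1: word expansion.} Put $S:=\sum_i\lambda_i^{(p-1)/(2p)}\nu_k(X_i)^{1/p}$. For $m\ge1$, $\sqrt{N_{km}}=\sqrt{N_k}^{\otimes m}=\sum_w X_w$ with $X_w=X_{w_1}\otimes\cdots\otimes X_{w_m}$ over words $w$. Tensoring the positive inequalities gives $X_wX_w^\dagger\le\lambda_wM_{km}$ with $\lambda_w=\prod_j\lambda_{w_j}$. Multiplicativity of $\nu$ gives $\nu_{km}(X_w)=\prod_j\nu_k(X_{w_j})$. Hence $\sum_w\lambda_w^{(p-1)/(2p)}\nu_{km}(X_w)^{1/p}=S^m$.

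\emph{Step 2: a single probe.} Let $\psi$ be a pure probe on $RA^{km}$ with $R\simeq A^{km}$. Up to an isometry on $R$, which leaves $\sandD_p$ unchanged, its outputs are $\rho=GN_{km}G^\dagger$ and $\tau=GM_{km}G^\dagger$, where $G=\sqrt\sigma\otimes\1$ for a state $\sigma$ (the transposed input marginal), as in the canonical probe of \eqref{sdp:parallel}. Write $\rho=YY^\dagger$ with $Y=G\sqrt{N_{km}}=\sum_wGX_w$. Then
\[
\sandQ_p(\rho\Vert\tau)^{1/(2p)}=\bigl\|\tau^{\frac{1-p}{2p}}Y\bigr\|_{2p},
\]
because $\tau^{\frac{1-p}{2p}}YY^\dagger\tau^{\frac{1-p}{2p}}=|Y^\dagger\tau^{\frac{1-p}{2p}}|^2$. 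The right-hand side is a Schatten norm of a quantity linear in $Y$, so the triangle inequality gives
\[
\sandQ_p(\rho\Vert\tau)^{1/(2p)}\le\sum_w\bigl\|\tau^{\frac{1-p}{2p}}GX_w\bigr\|_{2p}=\sum_w\sandQ_p(GX_wX_w^\dagger G^\dagger\Vert\tau)^{1/(2p)}.
\]
For each term, $GX_wX_w^\dagger G^\dagger\le\lambda_w\tau$, so \eqref{eq:renyi-trace-bound} applies. This yields $\sandQ_p\le\lambda_w^{p-1}\|(\sqrt\sigma\otimes\1)X_w\|_2^2\le\lambda_w^{p-1}\nu_{km}(X_w)^2$ by \eqref{eq:amplitude-identities}. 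Taking $2p$-th roots and summing gives $\sandQ_p(\rho\Vert\tau)^{1/(2p)}\le S^m$.

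\emph{Step 3: regularize.} The bound from Step 2 is uniform in $\psi$. Supremizing therefore gives $\frac{p-1}{2p}\sandD_p(\cN^{\otimes km}\Vert\cM^{\otimes km})\le m\log S$. Dividing by $m$ and letting $m\to\infty$ along multiples of $k$ yields $\frac{k(p-1)}{2p}\sandD_p^\infty\le\log S$, which is \eqref{eq:full-word-bound}.

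\emph{Main obstacle.} The delicate point is Step 2. One must justify that the R\'enyi trace functional is the $2p$-th power of a genuine norm of $Y$ when $\tau$ is not invertible, with inverse powers taken on $\supp\tau$. Here support inclusion holds for every term, since $GX_wX_w^\dagger G^\dagger\le\lambda_w\tau$ forces the range of $GX_w$ into $\supp\tau$, and their sum $Y$ inherits this. One must also check the probe-to-$G$ reduction, namely that Schmidt compression of the reference costs nothing.
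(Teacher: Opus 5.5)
Your proposal is correct and follows the paper's proof essentially step for step. You tensorize the decomposition into words, represent an arbitrary probe by $\sqrt\sigma\otimes\1$, apply the Schatten $2p$-norm triangle inequality to $\tau^{(1-p)/(2p)}(\sqrt\sigma\otimes\1)\sqrt{N_{km}}$, bound each word via \eqref{eq:renyi-trace-bound} and the amplitude norm, and regularize along multiples of $k$. The only difference is cosmetic: you factorize the word constants before bounding each term, whereas the paper bounds a generic amplitude first and then factorizes.
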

\begin{proof}
Fix $k,p$ and repeat the decomposition $m$ times. An arbitrary probe
on all $km$ uses is represented, up to a reference isometry, by a
density operator $\sigma$;
write $L_\sigma=\sqrt\sigma\otimes\1$ and
$\beta=L_\sigma M_{km}L_\sigma$. This is a state because
$\Tr_{B^{km}}M_{km}=\1$ and $\Tr\sigma=1$.
For an amplitude $X$ with $XX^\dagger\le\lambda M_{km}$, put
$\gamma=L_\sigma XX^\dagger L_\sigma$. Then
\begin{equation}
 \gamma\le\lambda\beta,\qquad
 \operatorname{ran}(L_\sigma X)=\supp\gamma\subseteq\supp\beta.
 \label{eq:amplitude-support}
\end{equation}
Thus the negative powers below are well defined on $\supp\beta$,
including for singular $\sigma$ and singular $J^{\cM}$.
Equation \eqref{eq:renyi-trace-bound} implies
\begin{equation}
 \bigl\|\beta^{-(p-1)/(2p)}L_\sigma X\bigr\|_{2p}
 \le\lambda^{(p-1)/(2p)}
       \bigl(\Tr L_\sigma XX^\dagger L_\sigma\bigr)^{1/(2p)}
 \le\lambda^{(p-1)/(2p)}\nu_{km}(X)^{1/p}.
 \label{eq:one-amplitude-bound}
\end{equation}
For a word
$X_{\boldsymbol i}=X_{i_1}\otimes\cdots\otimes X_{i_m}$, both its order
constant and its amplitude norm factorize. The Schatten triangle
inequality therefore gives
\begin{align}
 \bigl\|\beta^{-(p-1)/(2p)}L_\sigma\sqrt{N_{km}}\bigr\|_{2p}
 &\le\sum_{i_1,\ldots,i_m}\prod_{j=1}^m
     \lambda_{i_j}^{(p-1)/(2p)}\nu_k(X_{i_j})^{1/p}\nonumber\\
 &=\left(\sum_i\lambda_i^{(p-1)/(2p)}\nu_k(X_i)^{1/p}\right)^m.
 \label{eq:word-moment}
\end{align}
The $2p$th power of the left side is the null output's sandwiched moment.
The bound is independent of $\sigma$: no product-probe assumption was
made. Optimize over $\sigma$, take logarithms, divide by $km$, and let
$m\to\infty$ at fixed $k,p$. Regularization along multiples of $k$
gives \eqref{eq:full-word-bound}.
\end{proof}

\subsection{Closing the endpoint with two approximation rates}
\label{sec:endpoint-conclusion}
\begin{proof}[Proof of \Cref{thm:endpoint}]
Suppose $D_*<R_*$. Fix $D_*<r<R_*<v$, and put
$\delta=\sqrt{D_*/r}<1$. By \eqref{eq:regimes}--\eqref{eq:vanishing-regime},
$\limsup_k e_k(r)\le\delta$ and $e_k(v)\to0$.
The common approximations give the identity
\begin{equation}
 \sqrt{N_k}
   =X_k(r)+\bigl(X_k(v)-X_k(r)\bigr)
                +\bigl(\sqrt{N_k}-X_k(v)\bigr).
 \label{eq:three-amplitudes}
\end{equation}
For these three amplitudes, admissible order constants and norm bounds are
\begin{equation}
\begin{array}{c|c|c}
 \text{amplitude}&\lambda_i\ \text{in }X_iX_i^\dagger\le\lambda_iM_k
                       &\nu_k(X_i)\ \text{at most}\\ \hline
 X_k(r)&\e^{kr}&1+e_k(r)\\
 X_k(v)-X_k(r)&4\e^{kv}&e_k(r)+e_k(v)\\
 \sqrt{N_k}-X_k(v)&4\e^{k\max\{C,v\}}&e_k(v)
\end{array}
\label{eq:amplitude-norms}
\end{equation}
The norm bounds are triangle inequalities. The order bounds follow from
$(X-Y)(X-Y)^\dagger\le2XX^\dagger+2YY^\dagger$ and
$N_k\le\e^{kC}M_k$.

Fix $s>0$. For $k>2s$, choose
\[
 p_k=\frac{k}{k-2s},\qquad
 \frac{k(p_k-1)}{2p_k}=s,\qquad \frac1{p_k}=1-\frac{2s}{k}.
\]
Applying \Cref{lem:moment} and $R_*\le\sandD_{p_k}^\infty$ gives exactly
\begin{align}
 \e^{sR_*}
 &\le\e^{sr}(1+e_k(r))^{1-2s/k}
   +4^{s/k}\e^{sv}(e_k(r)+e_k(v))^{1-2s/k}\nonumber\\
 &\quad+4^{s/k}\e^{s\max\{C,v\}}e_k(v)^{1-2s/k}.
 \label{eq:finite-block-endpoint}
\end{align}
The repetition limit in \Cref{lem:moment} was already taken at each
fixed $k,p_k$. Now let $k\to\infty$ with $r,v,s$ fixed. Since
$0\le e_k(v)\le1$ and $1-2s/k\ge1/2$ eventually, the last term vanishes.
For the first two terms, use $e_k(r)\le\delta+\zeta$ and
$e_k(v)\le\zeta$ eventually, take the upper limit, and then let
$\zeta\downarrow0$. This needs no limit for $e_k(r)$ and includes
$D_*=0$. The result is
\begin{equation}
 \e^{sR_*}\le(1+\delta)\e^{sr}+\delta\e^{sv}.
 \label{eq:profile-endpoint-bound}
\end{equation}
This holds for every fixed $v>R_*$ and $s>0$. Let $v\downarrow R_*$:
\[
 1-\delta\le(1+\delta)\e^{-s(R_*-r)}.
\]
The left side is strictly positive, whereas the right side tends to zero
as $s\to\infty$. This contradiction proves $R_*=D_*$. Monotonicity in
$p$ identifies $R_*$ with the order-one limit. The limits were taken in
the order $m\to\infty$, $k\to\infty$, $v\downarrow R_*$, $s\to\infty$;
no estimate uniform in $v$ or interchange of input optimization is used.
\end{proof}

The two approximation rates have different jobs. The rate $r$ gives an
error below one; the rate $v$ moves that nonvanishing correction to a cost
arbitrarily close to $R_*$. Only a vanishing remainder is charged at the
possibly larger cost $C$. Thus an exponentially accurate low-cost
approximation is not a prerequisite for the endpoint proof.

\subsection{Completing the channel Stein theorem}
\begin{proof}[Completion of the proof of \Cref{thm:stein}]
The lower bound and the infinite-rate case were settled at the beginning
of \cref{sec:reduction}. In the supported case, fix
$r>D^\infty(\cN\Vert\cM)$ and choose $D^\infty<t<r$.
By \Cref{thm:endpoint}, there is a fixed $q>1$ with
$\sandD_q^\infty(\cN\Vert\cM)<t$. Put
$\eta_t=(q-1)(t-\sandD_q^\infty(\cN\Vert\cM))>0$.
Equations \eqref{eq:general-score-bound} and \eqref{eq:vanishing-regime}
then give, for every general tester with $b\le\e^{-nr}$,
\begin{equation}
 a\le \e^{-n(r-t)}+g_n\e^{-n\eta_t}.
 \label{eq:direct-stein-converse}
\end{equation}
Since $\log g_n=o(n)$, there are $c>0$ and $n_0$, depending only on
the channels and $r$, such that the right side is at most $\e^{-cn}$
for $n\ge n_0$. This proves the exponential strong converse directly.
For a fixed $\varepsilon\in(0,1)$, the condition $a\ge1-\varepsilon$
therefore excludes $b\le\e^{-nr}$ for all sufficiently large $n$.
Thus the upper Stein rate is at most $r$. Let $r\downarrow D^\infty$
and combine with \eqref{eq:stein-lower-bound} and the tester inclusions.
\end{proof}

The one-shot duality and the Stein theorem give the lower AEP bound.
For the upper bound, the additional ingredient is the IID domination
\eqref{eq:slack-lift}: the endpoint theorem makes its slack exponentially
small, so its polynomial prefactor can be absorbed into any fixed budget.
\begin{corollary}[Modified smooth max-relative entropy AEP]
\label{thm:smoothing-aep}
For every finite-dimensional CPTP pair $\cN,\cM:A\to B$,
every fixed $\delta\in(0,1)$, and
$\bullet\in\{\mathrm{all},\mathrm{aff},\mathrm{prod},\mathrm{iid}\}$,
\begin{equation}
    \lim_{n\to\infty}\frac1n
    \sandD_{\max,\bullet}^{\delta}
    (\cN^{\otimes n}\Vert\cM^{\otimes n})
    =D^\infty(\cN\Vert\cM).
    \label{eq:smoothing-aep}
\end{equation}
The equality includes the extended value $+\infty$.
\end{corollary}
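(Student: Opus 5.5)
We prove matching lower and upper bounds and use the order chain \eqref{eq:smoothing-order-chain}. The lower bound goes through the smallest set, $\bullet=\mathrm{all}$, and the upper bound through the largest, $\bullet=\mathrm{iid}$. All four limits are then squeezed to $D^\infty(\cN\Vert\cM)$.

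\textbf{Unsupported case.} Suppose $\supp J^{\cN}\not\subseteq\supp J^{\cM}$. Let $P$ be the projector onto $\ker M_n$. Compress any feasible inequality $N_n\le\lambda M_n+\delta Q$ with $P$. This gives $\Tr PN_n\le\delta\,\Tr PQ$.
\begin{itemize}
\item Take $P=P_0^{\otimes n}$, where $P_0$ projects onto $\ker J^{\cM}$; this projects into $\ker M_n$. Then $\Tr PN_n=d_A^n q_0^n$, which does not by itself give a contradiction.
\item The correct route is the duality used for the operational rate. From \eqref{eq:slack-dmax-inverse}, finite $\sandD_{\max,\mathrm{all}}^\delta\le\log\lambda$ forces $\Delta_n(\lambda)\le\delta$.
\item The parallel test at the start of \cref{sec:reduction} has $b=0$ and $a\to1$. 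Hence $\Delta_n(\lambda)\ge a\to1>\delta$ for every $\lambda$.
\end{itemize}
So $\sandD_{\max,\mathrm{all}}^\delta=+\infty$ for large $n$, and the chain gives $+\infty$ for every $\bullet$.

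\textbf{Lower bound (supported case).} Assume support inclusion and fix $\delta\in(0,1)$. Pick $q\in(\delta,1)$. The one-shot duality \eqref{eq:one-shot-par}, evaluated at this $\delta$, gives
\[
D_H^{1-q,\parr}(\cN^{\otimes n}\Vert\cM^{\otimes n})\le\sandD_{\max,\mathrm{all}}^\delta-\log(q-\delta).
\]
Divide by $n$ and apply \Cref{thm:stein} with $\varepsilon=1-q\in(0,1)$. This yields
\[
\liminf_n\frac1n\sandD_{\max,\mathrm{all}}^\delta\ge D^\infty.
\]
By the order chain, the same lower bound holds for every $\bullet$.

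\textbf{Upper bound (supported case).} Fix $t>D^\infty$. By \Cref{thm:endpoint} there is $q>1$ with $\sandD_q^\infty<t$.
\begin{itemize}
\item Equation \eqref{eq:vanishing-regime} gives $\Delta_n(\e^{nt})\le\e^{-n\eta_t}$ with $\eta_t>0$.
\item The IID domination \eqref{eq:slack-lift} gives $N_n\le\e^{nt}M_n+g_n\Delta_n(\e^{nt})Q_{n,t}$.
\item Here $Q_{n,t}$ is a probability mixture of $(J^{\cC})^{\otimes n}$. It is a limit of finite convex combinations and lies in the compact set $\mathcal K_{\mathrm{iid},n}$.
\item Since $\log g_n=O(\log n)$, we have $h_n:=g_n\e^{-n\eta_t}\le\delta$ for large $n$.
\item Because $Q_{n,t}\ge0$, the slack $h_nQ_{n,t}$ can be raised to $\delta Q_{n,t}$.
\end{itemize}
Thus $\lambda=\e^{nt}$ is feasible, so $\frac1n\sandD_{\max,\mathrm{iid}}^\delta\le t$ for large $n$. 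Letting $t\downarrow D^\infty$ and using the order chain gives the upper bound for every $\bullet$.

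The main substantive input is the endpoint theorem, which makes the slack exponentially small below every rate above $D^\infty$. Everything else is bookkeeping. Two small points need checking:
\begin{itemize}
\item the integral mixture from the de Finetti theorem belongs to the closed convex hull $\mathcal K_{\mathrm{iid},n}$;
\item the unsupported case is handled through the slack–testing correspondence rather than a direct compression.
\end{itemize}
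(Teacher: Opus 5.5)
Your proposal is correct and follows the paper's proof essentially step for step. You use the duality \eqref{eq:one-shot-par} at budget $\delta$ together with \Cref{thm:stein} for the lower bound, and the endpoint theorem with the R\'enyi slack bound and the IID domination \eqref{eq:slack-lift} for the upper bound. For the unsupported case you use the zero-Type-II parallel test; your detour through $\Delta_n$ and \eqref{eq:slack-dmax-inverse} is the paper's contraction $a_n\le\delta\Tr(T_{1,n}Q)\le\delta$ phrased differently.

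Two small remarks. First, your abandoned compression does work if you use the full projector onto $\ker M_n$ instead of $P_0^{\otimes n}$: it gives $1-(1-q_0)^n\le\delta$, and that projector is exactly the paper's test. Second, it is the entropy $\sandD_{\max,\mathrm{all}}^\delta$ that is smallest, while the set $\mathcal K_{\mathrm{all},n}$ is the largest; your inequalities nevertheless use the chain in the right direction.
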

\begin{proof}
First assume Choi support inclusion. Fix $q\in(\delta,1)$.
Using the candidate budget $\delta$ in \eqref{eq:one-shot-par} gives
\[
 \sandD_{\max,\mathrm{all}}^\delta
 (\cN^{\otimes n}\Vert\cM^{\otimes n})
 \ge D_H^{1-q,\parr}
 (\cN^{\otimes n}\Vert\cM^{\otimes n})+\log(q-\delta).
\]
Divide by $n$ and apply \Cref{thm:stein}. By the inclusion chain
\eqref{eq:smoothing-order-chain}, this proves the lower limit bound
for all four smoothing sets.

For the upper bound, fix $t>D^\infty(\cN\Vert\cM)$.
By \Cref{thm:endpoint}, choose $p>1$ such that
$\sandD_p^\infty(\cN\Vert\cM)<t$, and set
$\eta=(p-1)(t-\sandD_p^\infty(\cN\Vert\cM))>0$.
Equations \eqref{eq:hockey-renyi} and \eqref{eq:slack-lift} give
\[
 N_n\le\e^{nt}M_n+g_n\e^{-n\eta}Q_{n,\e^{nt}},
 \qquad Q_{n,\e^{nt}}\in\mathcal K_{\mathrm{iid},n}.
\]
Since $\log g_n=o(n)$, eventually $g_n\e^{-n\eta}\le\delta$.
Increasing this coefficient to $\delta$ preserves the order inequality,
so $\sandD_{\max,\mathrm{iid}}^\delta\le nt$ for all sufficiently large
$n$. The other three entropies are no larger. Letting $t\downarrow
D^\infty$ proves the upper limit bound.

If Choi support inclusion fails, use the parallel test constructed in
\Cref{sec:reduction}, with $b_n=0$ and
$a_n=1-(1-q_0)^n\to1$. Every feasible inequality
$N_n\le\lambda M_n+\delta Q$ with $Q\in\mathcal K_{\mathrm{all},n}$
would imply
\[
 a_n\le\delta\Tr(T_{1,n}Q)\le\delta,
\]
because the parallel normalizer has unit contraction with every joint
channel Choi operator. For all sufficiently large $n$, $a_n>\delta$,
so no finite $\lambda$ is feasible. This proves eventual infinity for
all four smoothing entropies, matching $D^\infty=+\infty$.
\end{proof}

\begin{remark}[Budgets and boundary checks]
The same proof allows a sequence of budgets $\delta_n\in(0,1)$ with
$\limsup_n\delta_n<1$ and $\log(1/\delta_n)=o(n)$: choose a fixed
$q>\limsup_n\delta_n$ for the lower bound, and absorb the exponentially
small IID slack into $\delta_n$ for the upper bound. At zero budget,
however, all four entropies equal $n\Dmax(\cN\Vert\cM)$, so the
positive-budget qualification is essential.

For identical channels, direct trace comparison and $Q=N_n$ give
$\sandD_{\max,\bullet}^\delta=\log(1-\delta)$ for every class.
Likewise, $D_H^{1-q,\mathsf S}=-\log q$. These formulas check the
normalization and the sign of the logarithmic correction in the duality.
\end{remark}

\section{Finite-use bounds and the exact strong-converse exponent}
\label{sec:finite}

The preceding proof establishes exponential decay above the Stein rate.
We now retain the optimal scalar constant in the testing bound to
quantify that decay. The resulting finite-use inequality, together with
known parallel achievability, identifies the exact exponent for all
three tester classes.

\subsection{The full fixed-order converse}
The right side of \eqref{eq:hockey-renyi} can be multiplied by the
sharper constant $c_p=(p-1)^{p-1}/p^p$. Indeed, maximizing
$(1-x)x^{p-1}$ for $x=\lambda b/a\in(0,1)$ gives
\[
 a-\lambda b\le c_p\lambda^{1-p}a^pb^{1-p}.
\]
Insert this strengthened bound on $\Delta_n$ into
\eqref{eq:general-score-bound} and minimize over $\lambda>0$.
The elementary identity
$\inf_{\lambda>0}\{\lambda b+c_pK\lambda^{1-p}\}
=K^{1/p}b^{(p-1)/p}$ gives
\begin{equation}
 \frac{a^pb^{1-p}}{g_n}
 \le \exp\bigl[(p-1)\sandD_p(\cN^{\otimes n}\Vert\cM^{\otimes n})\bigr]
 \le \e^{n(p-1)\sandD_p^\infty(\cN\Vert\cM)}.
 \label{eq:binary-converse}
\end{equation}
At $(a,b)=(0,0)$, the binary moment is defined to be zero.
Here $b=0$ implies $a=0$ by finite Choi domination; otherwise the
minimization applies directly. In particular, $a\ge1-\varepsilon$ implies
\begin{equation}
 D_H^{\varepsilon,\gen}(\cN^{\otimes n}\Vert\cM^{\otimes n})
 \le n\sandD_p^\infty(\cN\Vert\cM)
       +\frac{\log g_n+p\log(1/(1-\varepsilon))}{p-1}.
 \label{eq:finite-converse}
\end{equation}

For $b\le\e^{-nr}$, \eqref{eq:binary-converse} also gives
\begin{equation}
 a\le g_n^{1/p}
 \exp\!\left[-n\frac{p-1}{p}
       \bigl(r-\sandD_p^\infty(\cN\Vert\cM)\bigr)\right].
 \label{eq:exponential-converse}
\end{equation}
These bounds sharpen \eqref{eq:direct-stein-converse} and will identify
the exact exponent. Their proof uses only the scalar inequality
\eqref{eq:general-score-bound}; no explicit parallel tester is needed.

\subsection{The exact strong-converse exponent}
\label{sec:exact-exponent}
The preceding proof establishes the Stein theorem without using an
external exact-exponent formula. To identify the entire decay exponent,
we now use the parallel achievability part of Fawzi and Fawzi's
Theorem~5.5 \cite{FawziFawzi2021}. For a tester class $\mathsf S$ and
$r>0$, define
\begin{equation}
 E_{\rm sc}^{\mathsf S}(r)=
 \inf_{\substack{(T_{1,n},T_{2,n})\in\mathsf S\ \text{for all }n\\
        \liminf_n-n^{-1}\log b_n\ge r}}
       \limsup_{n\to\infty}-\frac1n\log a_n.
 \label{eq:sc-definition}
\end{equation}
Here $-\log0=+\infty$. In \cite[Section 5.2.1]{FawziFawzi2021},
the errors are $\alpha_n=1-a_n$ and $\beta_n=b_n$; their condition
$\limsup_n n^{-1}\log\beta_n\le-r$ is exactly the condition above.
Their base-two convention is converted to natural logarithms by
replacing $r$ with $r/\log2$ and multiplying divergences and exponents
by $\log2$. The cited theorem is stated for finite-dimensional CPTP
maps and explicitly achieves the exponent with nonadaptive, hence
parallel, strategies. Our corollary assumes Choi support inclusion;
neither channel is required to have a full-rank Choi operator.

\begin{corollary}[Exact exponent and its threshold]
\label{cor:exact-exponent}
For finite-dimensional CPTP maps $\cN,\cM:A\to B$ with
$\supp J^{\cN}\subseteq\supp J^{\cM}$, every $r>0$, and
$\mathsf S\in\{\parr,\ada,\gen\}$,
\begin{equation}
E_{\rm sc}^{\mathsf S}(r)
=\sup_{p>1}\frac{p-1}{p}
\bigl[r-\sandD_p^\infty(\cN\Vert\cM)\bigr].
\label{eq:exact-exponent}
\end{equation}
The common exponent is strictly positive if and only if
$r>D^\infty(\cN\Vert\cM)$.
\end{corollary}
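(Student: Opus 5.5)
The proof is a sandwich between the finite-use converse \eqref{eq:exponential-converse}, which covers general testers, and the parallel achievability part of \cite[Theorem 5.5]{FawziFawzi2021}. The inclusions $\parr\subseteq\ada\subseteq\gen$ give $E_{\rm sc}^{\gen}(r)\le E_{\rm sc}^{\ada}(r)\le E_{\rm sc}^{\parr}(r)$, since enlarging the class can only lower the infimum in \eqref{eq:sc-definition}. It therefore suffices to show two things: $E_{\rm sc}^{\parr}(r)$ is at most the right side of \eqref{eq:exact-exponent}, and $E_{\rm sc}^{\gen}(r)$ is at least it.

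\textbf{Upper bound.} I would quote the Fawzi--Fawzi formula, after the base conversion described before the corollary. It states that parallel strategies achieve Type-II exponent at least $r$ with $\limsup_n -n^{-1}\log a_n$ equal to $\sup_{p>1}\frac{p-1}{p}[r-\sandD_p^\infty]$. This bounds $E_{\rm sc}^{\parr}(r)$ above. One point needs checking: their constraint on $\beta_n$ must match the $\liminf$ condition here, and the paper has already confirmed this.

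\textbf{Lower bound.} Take any sequence of general testers with $\liminf_n -n^{-1}\log b_n\ge r$. Fix $p>1$ and $r'<r$. Then $b_n\le \e^{-nr'}$ for all large $n$. The bound \eqref{eq:exponential-converse} gives
\[
-\tfrac1n\log a_n\ge \tfrac{p-1}{p}\bigl(r'-\sandD_p^\infty\bigr)-\tfrac{\log g_n}{pn}.
\]
Since $\log g_n=O(\log n)$, the last term vanishes. Taking $\liminf$, then letting $r'\uparrow r$, and then taking the supremum over $p$ shows that even the $\liminf$ of $-n^{-1}\log a_n$ is at least the right side. Hence so is the $\limsup$, and so is the infimum over sequences. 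If $a_n=0$, then $-\log a_n=+\infty$ and the inequality holds trivially. This step is routine.

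\textbf{Positivity criterion.} If $r>D^\infty$, \Cref{thm:endpoint} gives some $p>1$ with $\sandD_p^\infty<r$, and that term of the supremum is strictly positive. If $r\le D^\infty$, then $\sandD_p^\infty\ge D^\infty\ge r$ for every $p>1$ by \eqref{eq:rate-order}, so every term of the supremum is $\le 0$. The supremum is also $\ge 0$, because $\sandD_p^\infty\le \Dmax<\infty$ forces each term to tend to $0$ as $p\downarrow1$. So the exponent is exactly $0$.

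The main point requiring care is the upper bound's reliance on the external theorem: its conventions (strict versus non-strict exponent constraints, and $\limsup$ versus $\liminf$ in the exponent of $a_n$) must match \eqref{eq:sc-definition}. The analytic content — the lower bound and the threshold — is supplied by \eqref{eq:exponential-converse} and \Cref{thm:endpoint}.
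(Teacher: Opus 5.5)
Your proposal is correct and follows essentially the same route as the paper. The general-tester lower bound comes from the finite-order converse (\eqref{eq:exponential-converse}, equivalently \eqref{eq:binary-converse}) with a slack $r'<r$, and the parallel upper bound from \cite[Theorem 5.5]{FawziFawzi2021}; the tester inclusions collapse the three exponents, and \Cref{thm:endpoint} with \eqref{eq:rate-order} gives the positivity threshold. Your observation that the lower bound even controls the $\liminf$ of $-n^{-1}\log a_n$ is a harmless strengthening of what the paper states.
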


\begin{proof}
For every sequence in the definition and every fixed $p>1$,
\eqref{eq:binary-converse} gives
\[
 \limsup_n-\frac1n\log a_n\ge
       \frac{p-1}{p}\bigl[r-\sandD_p^\infty(\cN\Vert\cM)\bigr].
\]
Indeed, first replace $r$ by $r-\eta$ in the eventual bound on $b_n$,
then let $\eta\downarrow0$; $n^{-1}\log g_n\to0$. Taking the supremum
over $p$ gives a lower bound on the general exponent. Parallel achievability in
\cite[Theorem 5.5]{FawziFawzi2021} bounds the parallel exponent from
above by the same expression. The tester inclusions give
$E_{\rm sc}^{\gen}\le E_{\rm sc}^{\ada}\le E_{\rm sc}^{\parr}$,
so all three values coincide. This equality does not use the endpoint;
the endpoint identifies its positivity threshold.
By \Cref{thm:endpoint}, the supremum is positive for $r>D^\infty$.
For $r\le D^\infty$, its terms are nonpositive and approach zero as
$p\downarrow1$, since $\sandD_p^\infty\le\Dmax<\infty$.
\end{proof}

\section{Discussion}\label{sec:discussion}
For finite-dimensional memoryless channels, feedback and indefinite
causal order do not change the fixed-error Stein rate. At every finite
Type-II exponent above $D^\infty(\cN\Vert\cM)$, correct null
acceptance decays exponentially. When
$\supp J^{\cN}\subseteq\supp J^{\cM}$, the exact parallel
strong-converse exponent at every $r>0$ also applies to adaptive and
general testers.
These conclusions identify the asymptotic testing threshold even
though the tester classes may perform differently at finite blocklength.

A single normalized positive slack organizes the proof. Omitting the
redundant positivity constraint on the tester's marginal makes its SDP
dual satisfy an exact partial-trace normalization. Fixed-marginal de
Finetti reduction then gives \eqref{eq:slack-lift}, which transfers
parallel testing bounds to every general tester and supplies the IID
smoother in the AEP. Independently, lifting that slack to amplitudes
controls every input before tensorization. The three-amplitude comparison
identifies the regularized R\'enyi endpoint without exchanging optimization
and asymptotic limits.

The one-shot duality further identifies which smoothing constraint
belongs to each tester normalization. Alongside the all-channel and
positive-affine sets in the exact dualities, the convex product and
IID sets have the same asymptotic limit: all four modified smooth
entropies converge to
$D^\infty(\cN\Vert\cM)$ at every fixed budget in $(0,1)$.
The proof uses the endpoint to make the Choi slack exponentially
small, then absorbs the polynomial cost of its IID domination.

The explicit conversion in \Cref{app:parallelization} scales both
acceptance probabilities by $1/g_n$; it
does not imply equality of finite-use tests. A second-order analysis
would require quantitative control of the common-amplitude error as
the rate approaches $D^\infty$ and the R\'enyi order approaches one.
Such a bound would have to track the $O(\log n)$ postselection term in the finite-copy converse \eqref{eq:finite-converse}. The present
endpoint proof establishes continuity but supplies no such rate of
convergence.

\paragraph{Note added.} 
After completing the initial version of this note on September 7,
2026~\cite{private_email}, we learned of independent work by Gao, Ji, and Liu
\cite{GaoJiLiu2026}, posted on September 23, 2026, on the fixed-error
channel Stein theorem and exponential strong converse for parallel
and adaptive strategies. The argument in this manuscript shares the testing-to-amplitude construction with Gao, Ji, and Liu~\cite{GaoJiLiu2026}, but closes the endpoint directly through this three-amplitude bound; their proof first makes the approximation exponentially accurate through iterative rate reduction and a
fixed-block tensor expansion.

\section*{Acknowledgements}
The authors thank Ludovico Lami, Bartosz Regula, Yinan Li, and Li Gao for insightful discussions. During his visit to HKUST(GZ), Bartosz Regula raised the question of whether indefinite causal order could improve channel discrimination; we are especially grateful to him for this suggestion.
This work was supported in part by the National Key R\&D Program
of China (Grant No.~2024YFE0102500) and the National Natural Science
Foundation of China (Grant Nos.~92576114, 12447107).
Additional support was provided by the Guangdong Provincial Quantum
Science Strategic Initiative (Grant Nos.~GDZX2403008, GDZX2503001)
and the Guangdong Provincial
Key Lab of Integrated Communication, Sensing and Computation for
Ubiquitous Internet of Things (Grant No.~2023B1212010007).

\Needspace{14\baselineskip}
\paragraph{AI disclosure.}
The authors formulated the original problem and obtained the results leading up to~\Cref{thm:endpoint} without AI assistance, including the one-shot duality, polynomial reduction for general testers and the fixed-order R\'enyi converse. GPT-5.6 sol then generated the initial version of~\Cref{thm:endpoint}. The authors verified and refined the generated proof to the best of our knowledge and take full responsibility for the proof and claim.

\appendix
\Needspace{15\baselineskip}
\section{Exact polynomial parallelization}
\label{app:parallelization}
The main proof only uses the scalar consequence \eqref{eq:general-score-bound}.
For completeness, we retain the stronger exact acceptance-pair conversion.

\begin{proposition}[Polynomial parallelization]
\label{prop:reduction}
Every $n$-slot general tester with acceptance pair $(a,b)$ admits an
$n$-slot parallel tester with pair $(a/g_n,b/g_n)$. Consequently,
\begin{equation}
 \beta_\varepsilon^{\gen}(\cN^{\otimes n},\cM^{\otimes n})
 \ge g_n\,
 \beta_{1-(1-\varepsilon)/g_n}^{\parr}
       (\cN^{\otimes n},\cM^{\otimes n}).
 \label{eq:roc-reduction}
\end{equation}
\end{proposition}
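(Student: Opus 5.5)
The plan is to convert a general tester into a parallel one by showing that its acceptance effect is dominated by $g_n\,\sigma\otimes\1_{B^n}$ for some state $\sigma$. The de Finetti domination \eqref{eq:definetti} supplies this bound after a semidefinite duality.

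\emph{Step 1 (symmetrize).} Let $(T_1,T_2)$ be a general tester with pair $(a,b)$ and $W=T_1+T_2$. Replace $T_1,T_2$ by their averages over simultaneous permutations of the input--output pairs $(A_iB_i)$. The general normalization is preserved, because a permutation maps a product $J^{\cC_1}\otimes\cdots\otimes J^{\cC_n}$ to another product of the same form. The values $a,b$ are also preserved, since $N_n$ and $M_n$ are permutation invariant. So we may assume $T_1$ is permutation invariant.

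\emph{Step 2 (duality for the normalizer).} Consider
\[
\mu(T_1)=\min\{\Tr\sigma':\ T_1\le\sigma'\otimes\1_{B^n}\}.
\]
Its SDP dual is
\[
\max\{\Tr T_1Q:\ Q\ge0,\ \Tr_{B^n}Q=\1_{A^n}\}=\max_{Q\in\mathcal K_{\mathrm{all},n}}\Tr T_1Q .
\]
The Lagrangian computation is the same as in \Cref{lem:testing-slack}. Slater's condition holds by taking $\sigma'$ to be a large multiple of $\1$, and compactness of $\mathcal K_{\mathrm{all},n}$ gives attainment. Because $T_1$ is permutation invariant, twirling an optimal $Q$ keeps it feasible and optimal, so $Q$ can be taken permutation invariant.

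\emph{Step 3 (de Finetti bound).} For such $Q$, \eqref{eq:definetti} gives
\[
Q\le g_n\int(J^{\cC})^{\otimes n}\,\mathrm d\mu(\cC).
\]
Hence
\[
\Tr T_1Q\le g_n\int\Tr T_1(J^{\cC})^{\otimes n}\,\mathrm d\mu\le g_n\int\Tr W(J^{\cC})^{\otimes n}\,\mathrm d\mu=g_n .
\]
Here the middle inequality uses $T_1\le W$, and the last equality uses the general normalization on IID products. Thus there is $\sigma'\ge0$ with $T_1\le\sigma'\otimes\1$ and $\Tr\sigma'\le g_n$.

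\emph{Step 4 (build the parallel tester).} Put
\[
\sigma=\sigma'/g_n+\bigl(1-\Tr\sigma'/g_n\bigr)\1_{A^n}/d_A^n,
\]
which is a state, and define
\[
T_1'=T_1/g_n,\qquad T_2'=\sigma\otimes\1-T_1'.
\]
Then $T_2'\ge\sigma'\otimes\1/g_n-T_1/g_n\ge0$ and $T_1'+T_2'=\sigma\otimes\1$, so $(T_1',T_2')$ is feasible in \eqref{sdp:parallel}. Its acceptance pair is $(a/g_n,b/g_n)$.

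\emph{Step 5 (consequence).} Take a general tester attaining $\beta^{\gen}_\varepsilon$. It has $a\ge1-\varepsilon$ and $b=\beta^{\gen}_\varepsilon$; attainment holds by the compactness noted in \Cref{sec:operational}. The converted parallel tester has null acceptance at least $(1-\varepsilon)/g_n$, so its Type-I error is at most $1-(1-\varepsilon)/g_n$. Its Type-II error is $\beta^{\gen}_\varepsilon/g_n$, which is therefore at least $\beta^{\parr}_{1-(1-\varepsilon)/g_n}$. This is \eqref{eq:roc-reduction}.

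The main point needing care is Step 2. One must check that the duality holds with exact partial-trace normalization of $Q$, and that a permutation-invariant optimizer exists, since \eqref{eq:definetti} requires exactly that hypothesis. The rest is bookkeeping.
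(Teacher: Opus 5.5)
Your proposal is correct and is essentially the paper's argument: you twirl the tester, use SDP duality to turn the parallel-normalizer cost into a maximum over channel Choi operators, bound that maximum by $g_n$ with the de Finetti domination and the general normalization on IID products, and rescale $T_1$ by $1/g_n$. The differences are cosmetic: you dualize the cost of $T_1$ rather than of $W=T_1+T_2$, and you pad $\sigma'/g_n$ with $\1/d_A^n$ instead of normalizing $X/\Tr X$, which spares you the paper's check $\gamma(W)\ge1$. You should also say that the primal minimum is attained, not just the dual maximum (for example because $\sigma'\ge0$ is forced and trace sublevel sets are compact), since that is what gives $\Tr\sigma'\le g_n$ exactly.
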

\begin{proof}
Jointly twirl the tester without changing $(a,b)$, and put $W=T_1+T_2$.
Since $X\otimes\1\ge W\ge0$ already implies $X\ge0$, SDP duality gives
\begin{equation}
 \gamma(W)=\min_{\substack{X=X^\dagger\\X\otimes\1\ge W}}\Tr X
          =\max_{\substack{Q\ge0\\\Tr_{B^n}Q=\1_{A^n}}}\Tr WQ.
 \label{eq:parallel-normalizer-cost}
\end{equation}
Strict primal feasibility follows from a large scalar $X$; the minimum
is attained by trace compactness, and the dual maximum is attained
because the channel Choi set is compact. Twirl a dual optimizer $Q$ and apply
\eqref{eq:definetti}. General normalization then gives
\[
 \gamma(W)\le g_n\int\Tr\bigl[W(J^{\cC})^{\otimes n}\bigr]
                     \,\mathrm d\mu(\cC)=g_n.
\]
Moreover, $\gamma(W)\ge\Tr(WP_0)=1$, where $P_0$ is the product
depolarizing Choi operator. Thus the following normalization never
divides by zero. For an optimal $X$, put $\sigma=X/\Tr X$. Then
$W\le g_n\sigma\otimes\1$, so
$T_1/g_n$ and $\sigma\otimes\1-T_1/g_n$ form the required parallel
tester. Their acceptance pair is exactly $(a/g_n,b/g_n)$.
\end{proof}

\end{document}